\newtheorem{thm}{Theorem}[section]
\newtheorem{prop}[thm]{Proposition}
\newtheorem{lem}[thm]{Lemma}
\newtheorem{defi}[thm]{Definition}
\newtheorem{rem}{Remark}
\newcommand{\skel}{\mathbb S}
\renewcommand\simeq{\cong}
\begin{document}

\sloppy

\begin{frontmatter}


\title{On the Cartesian Skeleton and the Factorization of the Strong Product of Digraphs}

\author[SB]{Marc Hellmuth\corref{cor1}}
				\ead{marc.hellmuth@bioinf.uni-sb.de}
				\cortext[cor1]{corresponding author}
\author[SB,LJ]{Tilen Marc}
				\ead{marct15@gmail.com}

\address[SB]{Center for Bioinformatics, Saarland University, Building E
             2.1, Room 413, P.O. Box 15 11 50, D-66041 Saarbr\"{u}cken,
             Germany
   					}
\address[LJ]{Faculty of Mathematics and Physics, University of Ljubljana
						 Jadranska 19, 1000 Ljubljana, Slovenia}

\begin{abstract}
The three standard products (the Cartesian, the direct and the strong
product) of undirected graphs have been well-investigated, unique
prime factor decomposition (PFD) are known and polynomial time
algorithms have been established for determining the prime factors. 

For directed graphs, unique PFD results with respect to the standard products
are known. However, there is still a lack of algorithms, that computes
the PFD of directed graphs with respect to the direct and the strong product in
general. In this contribution, we focus on the algorithmic aspects for
determining the PFD of directed graphs with respect to the strong product.
Essential for computing the prime factors is the construction of a so-called
Cartesian skeleton. This article
introduces the notion of the Cartesian skeleton of directed graphs as
a generalization of the Cartesian skeleton of undirected graphs. We
provide new, fast and transparent algorithms for its construction.
Moreover, we present a first polynomial time algorithm for determining
the PFD with respect to the strong product of arbitrary connected digraphs. 
\end{abstract}

\begin{keyword}
Directed Graph \sep Strong Product \sep 
Prime Factor Decomposition Algorithms \sep Dispensable \sep Cartesian Skeleton
\end{keyword}
\end{frontmatter}

\section{Introduction}

Graphs and in particular graph products arise in a variety of different contexts, 
from computer science \cite{AMA-07, JHH+10} to theoretical biology 
\cite{SS04,Wagner:03a}, 
computational engineering  \cite{ka13,KK-08, KR-04} 
or just as natural structures in discrete mathematics 
\cite{Hammack:2011a,IMKL-00}.

For undirected simple graphs, it is well-known that each of the three
standard graph products, the Cartesian product
\cite{FHS-85,impe-2007,SA-60,VI-66}, the direct product
\cite{IM-98,MCKE-71} and the strong product
\cite{DOIM-70,FESC-92,MCKE-71}, satisfies the unique prime factor
decomposition property under certain conditions, and there are
polynomial-time algorithms to determine the prime factors. Several
monographs cover the topic in substantial detail and serve as standard
references \cite{Hammack:2011a,IMKL-00}.

For directed graphs, or digraphs for short, only partial results are
known. Feigenbaum showed that the Cartesian product of digraphs 
satisfies the unique prime factorization property
and provided a polynomial-time algorithm for its
computation \cite{Fei86}. McKenzie proved that digraphs have a unique
prime factor decomposition w.r.t.\  direct product requiring strong conditions on connectedness
\cite{MCKE-71}. This result was extended by Imrich and Kl{\"o}ckl in
\cite{Kloeckl:07, Kloeckl:10}. The authors provided unique prime factorization
theorems and a polynomial-time algorithm for the direct product of
digraphs under relaxed connectivity, but additional so-called thinness
conditions. The results of McKenzie also  imply that the strong
product of digraphs can be uniquely decomposed into prime factors \cite{MCKE-71}. 
Surprisingly, so far no general algorithm for determining the prime factors
of the strong product of digraphs has been established.

In this contribution, we are concerned with the algorithmic aspect of
the \emph{prime factor decomposition}, \emph{PFD} for short, w.r.t.\ 
the strong product of digraphs. The key idea for the prime
factorization of a strong product digraph $G=H\boxtimes K$ is the same
as for undirected graphs: We define the Cartesian skeleton $\skel(G)$
of $G$. The Cartesian
skeleton $\skel(G)$ is decomposed with respect to the Cartesian
product of digraphs. Afterwards, one determines the prime factors of
$G$ w.r.t.\  the strong product, 
using the information of the PFD of $\skel (G)$. This approach can
easily be extended if $G$ is not $S$-thin. In this contribution, 
we introduce the notion of the Cartesian skeleton of directed graphs 
 and show that it satisfies $\skel(H \boxtimes K) =
\skel(H)\Box \skel(K)$ for so-called ``$S$-thin'' digraphs. We prove
that $\skel(G)$ is connected whenever $G$ is connected and
provide new, fast and transparent algorithms for its construction.
Furthermore, we present the first polynomial-time algorithm for the computation of
the PFD w.r.t.\  the strong product of arbitrary connected digraphs. 


\section{Preliminaries}

\subsection{Basic Notation} 
A \emph{digraph} $G=(V,E)$ is a tupel consisting of a set of vertices
$V(G)=V$ and a set of ordered pairs $xy\in E(G)=E$, called (directed) edges or
arcs. In the sequel we consider only simple digraphs with finite vertex and
edge set. 
It is possible that both, $xy$ and $yx$ are contained in $E$. However, 
we only consider digraphs without loops, i.e., $xx\notin E$ for all $x\in V$. 
An \emph{undirected graph} $G=(V,E)$ is a tupel consisting of a set of vertices
$V(G)=V$ and a set of unordered pairs $\{x,y\}\in E(G)=E$. 
The \emph{underlying undirected} graph of a \emph{digraph} $G=(V,E)$ is the 
graph $U(G)=(V,F)$ with edge set
$F=\{\{x,y\}\mid xy\in E  \textrm{ or } yx\in E\}$.
A digraph $H$ is a \emph{subgraph} of a digraph $G$, in symbols
$H\subseteq G$, if $V(H)\subseteq V(G)$ and $E(H)\subseteq E(G)$. 
If in addition $V(H)=V(G)$, we call $H$ a \emph{spanning} subgraph
of $G$. If $H\subseteq G$ and all pairs of adjacent vertices in $G$  
are also adjacent in $H$ then $H$ is called an  \emph{induced} subgraph.
 The digraph $K_n=(V,E)$ with $|V|=n$ and $E=V\times V \setminus \{ (x,x) \mid x\in V\}$ is called a
\emph{complete graph}. 

A map $\gamma:V(H)\rightarrow V(G)$ 
such that $xy \in E(H)$ implies $\gamma(x)\gamma(y) \in E(G)$
for all $x,y \in V(G)$ is a \emph{homomorphism}.
We call two digraphs $G$ and $H$ \emph{isomorphic}, and write
$G\simeq H$, if there exists a bijective homomorphism $\gamma$ whose
inverse function is also a homomorphism. Such a map $\gamma$ is called an
\emph{isomorphism}.

Let $G=(V,E)$ be a digraph. The (closed) \emph{$N^+$-neighborhood} or
\emph{out-neighborhood} $N^+[v]$ of a vertex $v\in V$ is defined as
$N^+[v]=\{x\mid vx\in E\}\cup \{ v \}$. Analogously, the \emph{$N^-$-neighborhood}
or \emph{in-neighborhood} $N^-[v]$ of a vertex $v\in V$ is defined as
$N^-[v]=\{x\mid xv\in E\}\cup \{ v \}$. 
If there is a risk of confusion
we will write $N^+_G$, resp.,  $N^-_G$ 
to indicate that the respective neighborhoods are taken w.r.t.\  $G$. 
The \emph{maximum degree} $\Delta$ of a digraph $G=(V,E)$ is defined by
$\max_{v\in V} (|N^+[v]\setminus\{v\}|+|N^-[v]\setminus\{v\}|)$.

A digraph $G=(V,E)$ is \emph{weakly connected}, or \emph{connected} for short, 
if for every pair $x,y\in
V$ there exists a sequence $w=(x_0,\ldots ,x_n)$, called 
\emph{walk (connecting $x$ and $y$)} or just
\emph{xy-walk}, with $x=x_0$, $y=x_n$ such that $x_ix_{i+1}\in E
\textrm{\ or\ } x_{i+1}x_{i}\in E \textrm{\ for all\ } i \in \{
0,\ldots n-1\}.$ In other words, we call a digraph connected whenever
its underlying undirected graph is connected.

\subsection{The Cartesian and Strong Product}

The vertex set of the \emph{strong product} $G_1\boxtimes G_2$  of
two digraphs $G_1$ and $G_2$ is defined as 
$V(G_1)\times V(G_2) = \{(v_1,v_2)\mid v_1\in V(G_1), v_2\in V(G_2)\},$
Two vertices $(x_1,x_2)$, $(y_1,y_2)$ are adjacent
in $G_1\boxtimes G_2$ if one of the following conditions is
satisfied:
\begin{itemize} 
 \item[(i)] $x_1y_1\in E(G_1)$ and $x_2=y_2$,\vspace{-0.1in} 
 \item[(ii)]$x_2y_2\in E(G_2)$ and $x_1 = y_1$,\vspace{-0.1in} 
 \item[(iii)] $x_1y_1\in E(G_1)$ and $x_2y_2\in E(G_2)$.
\end{itemize}
The \emph{Cartesian product} $G_1  \Box G_2$ has the same vertex
set as $G_1\boxtimes G_2$, but vertices are only adjacent if they
satisfy (i) or (ii). 
Consequently, the edges of a strong product
that satisfy (i) or (ii) are called \emph{Cartesian}, the others
\emph{non-Cartesian}. 

The one-vertex complete graph $K_1$ 
serves as a unit for both products, as $K_1 \Box G = G$ and
$K_1 \boxtimes G = G$ for all graphs $G$.  
It is well-known that both products are associative
and commutative, see \cite{Hammack:2011a}.
Hence, a vertex $x$ of the strong product $\boxtimes_{i=1}^n G_i$ 
is properly ``coordinatized'' by the
vector $(x_1,\dots,x_n)$ whose entries are the vertices
$x_i$ of its factor graphs $G_i$.  
Therefore, the endpoints of
a Cartesian edge in a strong product
 differ in exactly one coordinate.

The Cartesian product and the strong product of digraphs is connected 
if and only if each of its factors is connected \cite{Hammack:2011a}.

In the product $\boxtimes_{i=1}^n G_i$, 
a \emph{$G_j$-layer} through vertex
$x$ with coordinates  $(x_1,\dots,x_n)$ is the induced subgraph $G_j^x$ in $G$
with vertex set
$\{(x_1,\dots x_{j-1},v,x_{j+1},\dots,x_n) \in V(G)\mid v \in
V(G_j)\}.$ Thus, $G_j^x$ is isomorphic to the factor $G_j$ for every $x\in V(G)$.  
For $y \in V(G_j^x)$ we have $G_j^x = G_j^y$, while $V(G_j^x) \cap V(G_j^z) = \emptyset$
if $z\notin V(G_j^x)$.

Finally, it is well-known that both products of connected digraphs satisfy 
the unique prime factorization property.

\begin{thm}[\cite{Fei86}]
Every finite simple connected digraph has a unique representation
as a Cartesian product of prime digraphs, up to isomorphism and order of the factors.
	\label{thm:uPFD-Cart}
\end{thm}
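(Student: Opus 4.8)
The statement is Feigenbaum's theorem, and the plan is to reduce it to the relation-theoretic machinery that governs Cartesian products, adapted to the directed setting. \emph{Existence} of a prime factorization is routine: by induction on $|V(G)|$, either $G$ is prime, or $G \cong A \Box B$ with both factors nontrivial, in which case $|V(A)|,|V(B)| < |V(G)|$ and one recurses; finiteness of $V(G)$ guarantees termination. The entire difficulty is \emph{uniqueness}, and here connectedness is indispensable, since disconnected digraphs notoriously fail to factor uniquely.

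For uniqueness, the plan is to attach to $G$ a \emph{canonical} equivalence relation on its arc set whose classes are forced to be the arc sets of the prime factors, independently of any chosen factorization. First I would define on $E(G)$ the directed analogue of the Djokovi\'c--Winkler relation $\Theta$: two arcs $e = x_1y_1$ and $f = x_2y_2$ are $\Theta$-related if
\[
 d(x_1,x_2) + d(y_1,y_2) \neq d(x_1,y_2) + d(y_1,x_2),
\]
where $d$ denotes the distance in the underlying undirected graph $U(G)$. Let $\Theta^*$ be the transitive closure of $\Theta$. The goal is to show that the $\Theta^*$-classes are precisely the sets of arcs lying in the layers of a single factor.

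The argument then splits into two halves. \textbf{(i)}~\emph{$\Theta^*$ refines every factorization.} Writing $G = \Box_i G_i$, one uses that distances in $U(G)=\Box_i U(G_i)$ add over the coordinates to show that the defining inequality fails whenever $e$ and $f$ belong to different factors; hence $\Theta$ never relates arcs across factors, and each $\Theta^*$-class lies within a single factor's arc set. \textbf{(ii)}~\emph{$\Theta^*$ is coarse enough.} Conversely, within one prime factor all arcs must fall into a single $\Theta^*$-class; this is where primeness enters, through the standard structural (square) lemmas that bound how a $\Theta^*$-class can split, now re-established with the arcs carrying orientations. Combining (i) and (ii), the $\Theta^*$-classes coincide with the factor arc-classes of \emph{any} prime factorization, and since $\Theta^*$ is defined purely from $G$, the classes, and therefore the factors, are determined uniquely up to isomorphism and order of the factors.

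The main obstacle is the passage from this edge partition back to an actual product, i.e.\ showing that the $\Theta^*$-classes constitute a \emph{product relation}: one must coordinatize $V(G)$ so that each class becomes a genuine Cartesian layer direction and verify that the induced directed adjacencies reassemble to $G$ with orientations that are consistent across parallel layers. Connectedness of $G$ drives this coordinatization, and the orientation bookkeeping, ensuring that opposite arcs of each Cartesian square point the same way, is exactly the directed refinement that distinguishes this proof from its undirected prototype. Once the product relation property is secured, both existence and uniqueness follow simultaneously, and the resulting factors inherit primeness by construction.
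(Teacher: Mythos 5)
This theorem is not proved in the paper at all: it is quoted from Feigenbaum \cite{Fei86}, whose method (as the paper itself sketches in Section 4.1) is to first factor the underlying undirected graph and then repeatedly \emph{merge} undirected factors whenever the arc orientations between parallel layers conflict. Judged on its own merits, your proposal has a genuine gap exactly where that merging step should be. Your relation $\Theta$ is defined via distances in $U(G)$ and is therefore completely blind to orientations, so $\Theta^*$ (and any relation obtained from it by square-lemma arguments inside $U(G)$) can only recover the factorization of $U(G)$, not of $G$. But a prime digraph can have a non-prime underlying graph, and then your claim (ii) fails. Concretely, take $V=\{00,01,10,11\}$ with arcs $00\to 01$, $11\to 01$, $00\to 10$, $10\to 11$. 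Here $U(G)\cong C_4=K_2\,\Box\,K_2$ and the two $\Theta^*$-classes are the two pairs of opposite edges; yet the digraph is prime, because in each coordinate direction the two parallel arcs point inconsistently ($00\to 10$ but $11\to 01$), so no coordinatization of $C_4$ lifts to a digraph product. Thus the arc set of the single prime factor $G$ meets both $\Theta^*$-classes, contradicting (ii): primeness of the \emph{digraph} cannot force arcs into one $\Theta^*$-class, precisely because $\Theta$ cannot see the orientations. The missing idea is a proof that the digraph PFD is obtained from the undirected PFD by combining all factors whose layer orientations conflict, together with an argument that this conflict-driven coarsening is well defined and independent of choices; your closing paragraph treats the ``orientation bookkeeping'' as a verification that will succeed, when in general it fails and forces the digraph factorization to be strictly coarser than the undirected one.

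A secondary, repairable inaccuracy: even for undirected graphs it is false that the edges of a prime factor form a single $\Theta^*$-class. In the path $P_3$, which is prime, the two edges are not $\Theta$-related. The classical theory therefore does not use $\Theta^*$ alone but the finest \emph{product relation} containing it, obtained by additionally merging incident edges that do not span squares (the relation usually called $\tau$); uniqueness then comes from the well-definedness of that finest product relation. Any corrected write-up would need both this $\tau$-type merging and, on top of it, the orientation-conflict merging described above.
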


\begin{thm}[\cite{MCKE-71}]
Every finite simple connected digraph has a unique representation
as a strong product of prime digraphs, up to isomorphism and order of the factors.
\label{thm:uPFD-strong}
\end{thm}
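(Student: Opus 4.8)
The plan is to separate the two assertions—existence and uniqueness of the prime factorization—and to dispatch them by entirely different means. Existence is a soft consequence of finiteness: since $K_1$ is a unit for $\boxtimes$ and any nontrivial factorization $G=H\boxtimes K$ with $H,K\neq K_1$ satisfies $|V(G)|=|V(H)|\cdot|V(K)|$ with both factors smaller, repeated factorization must terminate and yields a representation of $G$ as a strong product of prime digraphs. Connectivity is inherited along the way, because the strong product is connected if and only if all of its factors are connected; hence each prime factor produced is again connected. This part requires no real work.

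The substance is uniqueness, and my first move would be to linearize the problem by transporting it to a product that already enjoys unique factorization. The cleanest bridge is the classical correspondence between the strong product and the direct (tensor) product via reflexive closures: writing $G^{\ast}$ for the digraph obtained from $G$ by adding a loop $xx$ at every vertex, one checks on the common vertex set $V(H)\times V(K)$ that $(H\boxtimes K)^{\ast}$ coincides with the direct product $H^{\ast}\otimes K^{\ast}$, whose adjacency requires $x_1y_1\in E(H^{\ast})$ and $x_2y_2\in E(K^{\ast})$ simultaneously. In other words, strong multiplication of digraphs is exactly direct multiplication of the associated reflexive relations, so I would reduce the claim to a unique-factorization statement for connected reflexive relational structures under the direct product—precisely the setting of McKenzie's refinement and cancellation theorems, from which the result then follows.

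Alternatively—and this is the route that dovetails with the machinery developed later in the paper—I would attach to each digraph $G$ an auxiliary undirected graph, its Cartesian skeleton $\skel(G)$, spanning the same vertex set, and argue that $\skel(\cdot)$ converts the strong product into the Cartesian product, i.e. $\skel(H\boxtimes K)=\skel(H)\,\Box\,\skel(K)$. Granting this multiplicativity, any two strong prime factorizations of $G$ induce, via $\skel$, two Cartesian factorizations of $\skel(G)$; the uniqueness of the Cartesian prime factorization of digraphs, Theorem~\ref{thm:uPFD-Cart}, then forces the two skeleton factorizations to agree, and one lifts this agreement back to a matching of the strong factors.

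The main obstacle, in either route, is the failure of cancellation for digraphs possessing \emph{indistinguishable} vertices—vertices $x,y$ with $N^{+}[x]=N^{+}[y]$ and $N^{-}[x]=N^{-}[y]$—since such vertices can be permuted within a fiber without altering the product, so factors are determined only up to this redundancy, and the skeleton identity is valid only in the ``$S$-thin'' case. I would therefore first pass to the quotient of $G$ by the relation $x\sim y$ defined by $N^{+}[x]=N^{+}[y]$ and $N^{-}[x]=N^{-}[y]$, obtain an $S$-thin digraph, establish uniqueness there (where the skeleton is well behaved and cancellation holds), and only afterwards recover the factorization of $G$ by distributing the $\sim$-classes back over the thin factors. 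Verifying that the $\sim$-classes of a strong product are exactly the products of the $\sim$-classes of the factors, and that the induced quotient of a strong product is the strong product of the quotients, is the technical heart of the argument and the step I expect to cost the most effort.
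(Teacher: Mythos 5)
The paper offers no proof of this theorem at all: it is quoted directly from McKenzie \cite{MCKE-71}, the point (stated in the paper's introduction) being precisely that strong multiplication of digraphs is direct multiplication of their reflexive closures, so McKenzie's unique-factorization theory for connected relational structures applies. Your first route spells out exactly this reduction $(H\boxtimes K)^{\ast}=H^{\ast}\otimes K^{\ast}$ and is correct, so it is essentially the paper's approach; your skeleton-based alternative is the machinery the paper develops only for \emph{computing} the factorization (and there the passage from $\skel(G/S)$ back to the strong factors of $G$ is exactly the nontrivial gcd-lemma step you flag), not for proving uniqueness.
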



In the sequel of this paper we will make frequent use of the fact that
for $G=G_1 \boxtimes G_2$ holds  
$N_G^+[(x,y)]=N_{G_1}^+[x] \times N_{G_2}^+[y]$ and 
$N_G^-[(x,y)]=N_{G_1}^-[x] \times N_{G_2}^-[y]$.



\subsection{The Relations $S^+$, $S^-$ and $S$ and Thinness}

It is important to notice that although the PFD w.r.t.\  the strong
product of connected digraphs is unique, the assignment of an edge being
Cartesian or non-Cartesian is not unique, in general. This is usually
possible if two vertices have the same out- and in-neighborhood.
Thus, an important issue in the context of strong products is whether
or not two vertices can be distinguished by their neighborhoods. This
is captured by the relation $S$ defined on the vertex set of $G$,
which was first introduced by D{\"o}rfler and Imrich \cite{DOIM-70}
for undirected graphs.

Let $G=(V,E)$ be a digraph. We define three equivalence relations on
$V$, based on respective neighborhoods. Two vertices $x,y\in V$
are in relation $S^+$, in symbols $x \sim_{S^+} y$, if $N_G^+[x] =
N_G^+[y]$. Analogously, $x,y\in V$ are in relation $S^-$ if $N_G^-[x]
= N_G^-[y]$. Two vertices $x,y\in V$ are in relation $S$ if $x
\sim_{S^+} y$ and $x \sim_{S^-} y$. Clearly, $S^+$, $S^-$ and $S$ are
equivalence relations. For a digraph $G$ let $S^+(v)=\{ u\in V(G) \,
|\, u \sim_{S^+} v\}$ denote the equivalence class of $S^+$ that
contains vertex $v$. Similarly,  $S^-(v)$ and $S(v)$ are defined.

We call a digraph $G=(V,E)$ \emph{S-thin} or \emph{thin} for short, 
if for all distinct vertices 
$x,y\in V$ holds $N_G^+[x]\neq N_G^+[y]$ or $N_G^-[x]\neq N_G^-[y]$.
Hence, a digraph is thin, if each equivalence class $S(v)$ of $S$ consists
of the single vertex $v\in V(G)$. 
In other words, $G$ is thin if all vertices can be distinguished by
their in- or out-neighborhoods.


The digraph $G/S$ is the usual \emph{quotient graph} 
with vertex set $V(G/S) = \{a\mid a \text{ is an equivalence class of }S \text{ in } G\}$
and $ab \in E(G/S)$ whenever $xy \in E(G)$ for some $x\in a$ and $y \in b$. 


In the following, we give several basic results concerning the relation $S$ and
quotients $G/S$ of digraphs $G$.
 
\begin{lem}
A digraph $G=G_1 \boxtimes G_2$ is thin if and only if $G_1$ and $G_2$ are thin.
\label{lem:thinFactors}
\end{lem}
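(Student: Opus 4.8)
The plan is to reduce everything to a characterization of the relation $S$ in the strong product, obtained directly from the product formula for neighborhoods stated just above. The core observation is that the closed neighborhoods $N_{G_i}^{+}[v]$ and $N_{G_i}^{-}[v]$ are always nonempty, since each contains the vertex $v$ itself. Consequently, for Cartesian products of nonempty sets the identity $A\times B = C\times D$ forces $A=C$ and $B=D$ (project onto each coordinate). I will use this set-theoretic fact throughout.

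First I would establish that, for $G=G_1\boxtimes G_2$, two vertices $(x_1,x_2)$ and $(y_1,y_2)$ satisfy $(x_1,x_2)\sim_{S^+}(y_1,y_2)$ if and only if $x_1\sim_{S^+}y_1$ in $G_1$ and $x_2\sim_{S^+}y_2$ in $G_2$. Indeed, $(x_1,x_2)\sim_{S^+}(y_1,y_2)$ means $N_G^{+}[(x_1,x_2)]=N_G^{+}[(y_1,y_2)]$, which by the product formula equals $N_{G_1}^{+}[x_1]\times N_{G_2}^{+}[x_2]=N_{G_1}^{+}[y_1]\times N_{G_2}^{+}[y_2]$; the nonempty-product factorization then yields $N_{G_1}^{+}[x_1]=N_{G_1}^{+}[y_1]$ and $N_{G_2}^{+}[x_2]=N_{G_2}^{+}[y_2]$, and the converse is clear. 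The identical argument with $N^{-}$ gives the analogous statement for $S^-$. Since $S=S^+\cap S^-$ by definition, combining the two yields that $(x_1,x_2)\sim_S(y_1,y_2)$ if and only if $x_1\sim_S y_1$ and $x_2\sim_S y_2$.

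With this characterization both implications become immediate. For the ``if'' direction, suppose $G_1$ and $G_2$ are thin and $(x_1,x_2)\sim_S(y_1,y_2)$; then $x_1\sim_S y_1$ and $x_2\sim_S y_2$ force $x_1=y_1$ and $x_2=y_2$ by thinness, so $G$ is thin. For the ``only if'' direction I argue contrapositively: if, say, $G_1$ is not thin, there are distinct $x_1,y_1$ with $x_1\sim_S y_1$; fixing any $x_2\in V(G_2)$, the characterization gives $(x_1,x_2)\sim_S(y_1,x_2)$ with $(x_1,x_2)\neq(y_1,x_2)$, so $G$ is not thin. The symmetric argument handles the case where $G_2$ fails to be thin.

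I expect no genuine obstacle here; the only point requiring care is the nonemptiness used to factor the product of neighborhoods, which is guaranteed precisely because the neighborhoods are \emph{closed} and hence always contain their base vertex. The remainder is routine bookkeeping on the equivalence classes of $S$.
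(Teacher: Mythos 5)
Your proof is correct and uses essentially the same argument as the paper: both rest on the product formula $N_G^{\pm}[(x_1,x_2)]=N_{G_1}^{\pm}[x_1]\times N_{G_2}^{\pm}[x_2]$ together with the componentwise factorization of an equality of Cartesian products of nonempty sets. Your packaging of this as a characterization of $S^+$, $S^-$ and $S$ in the product is a tidier presentation (and your explicit nonemptiness justification fills in a step the paper leaves implicit), but it is not a different route.
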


\begin{proof}
Suppose that $G$ is not thin, and hence there are distinct vertices
$x=(x_1,x_2)\in V(G)$ and $y=(y_1,y_2)\in V(G)$ with
$N_G^+[(x_1,x_2)]= N_G^+[(y_1,y_2)]$ and  $N_G^-[(x_1,x_2)]= N_G^-[(y_1,y_2)]$.
This implies that $N_{G_1}^+[x_1] \times N_{G_2}^+[x_2]=N_{G_1}^+[y_1] \times
N_{G_2}^+[y_2]$. Hence,  $N_{G_1}^+[x_1]=N_{G_1}^+[y_1]$ and
$N_{G_2}^+[x_2]=N_{G_2}^+[y_2]$ and since $x\neq y$ we have
$x_1\neq y_1$ or $x_2\neq y_2$. Similar results hold for the 
$N^-$-neighborhoods.
Thus if $G$ is not thin, at least one of the factors is not thin.

On the other hand, if $G_1$ is not thin then
$N_{G_1}^+[x_1]=N_{G_1}^+[y_1]$ and $N_{G_1}^-[x_1]=N_{G_1}^-[y_1]$
for some $x_1\neq y_1$ and therefore $N_G^+[(x_1,z)]= N_G^+[(y_1,z)]$
and $N_G^-[(x_1,z)]= N_G^-[(y_1,z)]$ for all $z\in V(G_2)$.
\end{proof}

\begin{lem}
For any digraph $G=(V,E)$ the quotient graph $G/S$ is thin.
\label{lem:quotientthin}
\end{lem}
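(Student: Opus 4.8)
The plan is to show the equivalent statement that any two distinct vertices of $G/S$ --- that is, any two distinct $S$-classes $a$ and $b$ --- can be distinguished by their closed in- or out-neighborhoods in $G/S$. Writing $\pi:V(G)\to V(G/S)$ for the canonical map sending each vertex to its $S$-class, I would first establish the bridge between neighborhoods in $G$ and in $G/S$, namely that for every $x\in V(G)$ one has $N^+_{G/S}[\pi(x)]=\pi(N^+_G[x])$ and $N^-_{G/S}[\pi(x)]=\pi(N^-_G[x])$. Both inclusions follow directly from the definition of the edge set of $G/S$ together with the fact that all vertices of a single $S$-class share the same closed neighborhoods in $G$; the only point needing care is that an arc $xz\in E(G)$ with $\pi(z)=\pi(x)$ contributes nothing new, since the class $\pi(x)$ already lies in its own closed neighborhood.

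The key step, and the one I expect to be the main obstacle, is to prove that each set $N^+_G[x]$ (and likewise $N^-_G[x]$) is saturated under $S$, i.e. is a union of entire $S$-classes. Here the interplay of $S^+$ and $S^-$ is essential: if $w\in N^+_G[x]$ and $w'\sim_S w$, then from $xw\in E(G)$ we get $x\in N^-_G[w]=N^-_G[w']$, using $w'\sim_{S^-}w$, whence $w'\in N^+_G[x]$; the degenerate case $w=x$ is handled by $w'\sim_{S^+}x$, which gives $w'\in N^+_G[w']=N^+_G[x]$. The symmetric argument, exchanging the roles of $S^+$ and $S^-$, shows that $N^-_G[x]$ is $S$-saturated as well. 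It is precisely this saturation that prevents the quotient from collapsing two genuinely different neighborhoods into one.

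With these two facts in hand the conclusion is immediate. Suppose $a=\pi(x)$ and $b=\pi(y)$ satisfy $N^+_{G/S}[a]=N^+_{G/S}[b]$ and $N^-_{G/S}[a]=N^-_{G/S}[b]$. By the bridge this reads $\pi(N^+_G[x])=\pi(N^+_G[y])$ and $\pi(N^-_G[x])=\pi(N^-_G[y])$. Since $N^+_G[x]$, $N^+_G[y]$, $N^-_G[x]$ and $N^-_G[y]$ are all $S$-saturated, applying the preimage $\pi^{-1}$ to both sides of each equality recovers these sets exactly, so $N^+_G[x]=N^+_G[y]$ and $N^-_G[x]=N^-_G[y]$. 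Hence $x\sim_S y$, which forces $a=\pi(x)=\pi(y)=b$. Thus distinct classes always differ in a closed in- or out-neighborhood in $G/S$, which is exactly the assertion that $G/S$ is thin.
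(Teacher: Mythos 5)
Your proof is correct and rests on the same core fact as the paper's: because vertices of one $S$-class share both their closed in- and out-neighborhoods, adjacency in $G/S$ faithfully encodes adjacency in $G$ --- your bridge identity $N^+_{G/S}[\pi(x)]=\pi(N^+_G[x])$ together with $S$-saturation of $N^+_G[x]$ and $N^-_G[x]$ is just the set-level formulation of the paper's observation that $ab\in E(G/S)$ if and only if \emph{every} vertex of $a$ is joined to \emph{every} vertex of $b$. The paper then concludes by contradiction where you argue directly via preimages of saturated sets; that difference is purely cosmetic.
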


\begin{proof}
By definition of the relation $S$ for all $x,x'\in S(v)$
holds $N^+[x]=N^+[x']$ and $N^-[x]=N^-[x']$. Thus, 
there is an edge $xy\in E$, resp., $yx\in E$ for some $x\in S(v)$ 
if and only if for all $x'\in S(v)$ holds that 
$x'y\in E$, resp., $yx'\in E$.  
Thus, $ab \in E(G/S)$ if and only if 
for all $x\in a$ and $y\in b$ holds that $xy\in E$.

Assume $G/S$ is not thin. Then, there are distinct vertices 
$a,b\in V(G/S)$ with $S(a)=S(b)$ and hence, 
$N_{G/S}^+[a] = N_{G/S}^+[b] \textrm{ and } N_{G/S}^-[a]= N_{G/S}^-[b]$. 
Hence, $ac\in E(G/S)$ if and only if $bc\in E(G/S)$. 
By the preceding arguments, it holds that $ac\in E(G/S)$ if and only if 
for all $x\in a$ and $y\in c$ there is an edge $xy\in E$. 
Analogously, $bc\in E(G/S)$ if and only if 
for all $x'\in b$ and $y\in c$ there is an edge $x'y\in E$. 
Hence, $N_G^+[x] = N_G^+[x']$ for all $x\in a$ and $x'\in b$. 
By similar arguments one shows that $N_G^-[x] = N_G^-[x']$
for all $x\in a$ and $x'\in b$. 
But this implies that $a=S(x)=S(x')=b$, a contradiction. 
\end{proof}

\begin{lem}
Let $G$ be a digraph. Then
the subsets $S^+(v)$, $S^-(v)$ and  $S(v)$  induce complete
subgraphs for every vertex $v\in V(G)$.
\label{lem:eqcl-complete}
\end{lem}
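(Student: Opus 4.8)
The plan is to prove each of the three claims—that $S^+(v)$, $S^-(v)$, and $S(v)$ induce complete subgraphs—separately but by a common mechanism. Since $S \subseteq S^+$ and $S \subseteq S^-$ as relations (so $S(v) \subseteq S^+(v)$ and $S(v) \subseteq S^-(v)$), the completeness of the $S(v)$-induced subgraph will follow once I establish it for $S^+(v)$ and $S^-(v)$; in fact any subset of a complete-inducing set induces a complete subgraph, so the third claim is essentially free once the first two are done. Recall that in a complete digraph $K_n$ every ordered pair of distinct vertices is an arc, so to show a vertex subset $W$ induces a complete subgraph I must verify that for all distinct $x,y \in W$ both $xy \in E$ and $yx \in E$.

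First I would treat $S^+(v)$. Take distinct $x,y \in S^+(v)$, so by definition $N^+[x] = N^+[y]$. The key observation is the closed nature of the neighborhood: since $x \in N^+[x]$ always holds (by the $\{v\}$ term in the definition), and $y \in N^+[y] = N^+[x]$, we get $y \in N^+[x]$. Because $x \neq y$, membership $y \in N^+[x] = \{u \mid xu \in E\} \cup \{x\}$ forces $xy \in E$. Symmetrically, $x \in N^+[y] = N^+[x]$ together with $x \neq y$ gives $yx \in E$. Hence both arcs are present, so $S^+(v)$ induces a complete subgraph. The argument for $S^-(v)$ is entirely analogous, using $N^-[x] = N^-[y]$ and the fact that $y \in N^-[x] = \{u \mid ux \in E\} \cup \{x\}$ with $x \neq y$ yields $yx \in E$, and symmetrically $xy \in E$.

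For $S(v)$, note that $S(v) = S^+(v) \cap S^-(v)$, so any two distinct $x,y \in S(v)$ lie in $S^+(v)$ (hence $xy, yx \in E$ by the first case) and the conclusion is immediate; alternatively one simply observes that an induced subgraph on a subset of the vertices of a complete induced subgraph is again complete. The reasoning throughout is short and self-contained; the only point requiring care—and the thing I expect to be the main (if modest) obstacle—is the correct exploitation of the \emph{closed} neighborhood convention, namely that each vertex belongs to its own $N^+$ and $N^-$ set. Without this, equality of neighborhoods would not by itself force the edges; it is precisely the presence of the defining vertex inside its own closed neighborhood, combined with the hypothesis $x \neq y$, that converts neighborhood equality into the existence of the arcs in both directions.
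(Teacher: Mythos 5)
Your proof is correct and follows essentially the same route as the paper's: both arguments exploit the closed-neighborhood convention ($y \in N^+[y] = N^+[x]$ forces $xy \in E$, and symmetrically), and both reduce the $S(v)$ case to $S(v) = S^+(v) \cap S^-(v)$. Your write-up is, if anything, slightly more careful about the role of the hypothesis $x \neq y$ and the explicit definition of completeness for digraphs.
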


\begin{proof}
If $S^+(v)=\{v\}$, then the assertion is clearly true.
Now, let $x,y \in S^+(v)$ be arbitrary.  By definition, 
$y\in N_G^+[y]$ and thus, $y\in N_G^+[x]$ and therefore, 
$xy\in E(G)$. Analogously, it hols that $x\in N_G^+[y]$ and thus,
$yx \in E(G)$.
Since this holds for all vertices contained in  
$S^+(v)$, they induce a complete 
graph $K_{|S^+(v)|}$.
By analogous arguments, the assertion is true for $S^-(v)$.
Since $S(v)=S^+(v)\cap S^-(v)$ for all $v\in V(G)$ and 
since $S^+(v)$ and $S^-(v)$ induce complete graphs,  
it follows that $S(v)$ induces a complete graph.
\end{proof}

\begin{lem}
For any digraphs $G$ and $H$ holds that 
$(G\boxtimes H)/S\simeq G/S\boxtimes H/S$
\label{lem:prodQuotients}
\end{lem}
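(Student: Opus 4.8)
The plan is to exhibit an explicit isomorphism $\varphi\colon (G\boxtimes H)/S \to G/S \boxtimes H/S$ and to verify that it is a bijective homomorphism whose inverse is also a homomorphism. The natural candidate sends the $S$-class of a vertex $(x,y)$ in $G\boxtimes H$ to the pair $(S_G(x),S_H(y))$ of $S$-classes in the factors. Establishing that this map is well defined and bijective is the first step; showing that it preserves and reflects arcs is the second.

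For the first step I would show that the $S$-classes of $G\boxtimes H$ are precisely the products $S_G(x)\times S_H(y)$ of $S$-classes of the factors. Using the identities $N_{G\boxtimes H}^+[(x,y)]=N_G^+[x]\times N_H^+[y]$ and $N_{G\boxtimes H}^-[(x,y)]=N_G^-[x]\times N_H^-[y]$, two vertices $(x,y)$ and $(x',y')$ lie in relation $S$ exactly when the corresponding Cartesian products of closed neighborhoods coincide. Since every closed neighborhood is nonempty (it contains its base vertex), equality of such Cartesian products factors coordinatewise, yielding $N_G^+[x]=N_G^+[x']$, $N_G^-[x]=N_G^-[x']$ and the analogous equalities in $H$; that is, $x\sim_{S^+}x'$, $x\sim_{S^-}x'$ in $G$ and likewise in $H$. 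Hence the class of $(x,y)$ equals $S_G(x)\times S_H(y)$, so $\varphi$ is a well-defined bijection.

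For the second step the key observation is a clean description of adjacency in a quotient $G/S$: for classes $a=S_G(x)$ and $c$ one has $c\in N_{G/S}^+[a]$ if and only if $c\subseteq N_G^+[x]$. This rests on the within-class uniformity of neighborhoods already exploited in the proof of Lemma~\ref{lem:quotientthin}, namely that all members of $c$ share the same in-neighborhood, so either all of $c$ or none of it lies in $N_G^+[x]$, and this is independent of the representative $x\in a$. Granting this, I would evaluate both sides. Writing $A=a\times b$ and $C=c\times d$ for classes of $(G\boxtimes H)/S$, the same description applied to $G\boxtimes H$ gives $C\in N_{(G\boxtimes H)/S}^+[A]$ iff $c\times d\subseteq N_G^+[x]\times N_H^+[y]$, i.e.\ iff $c\subseteq N_G^+[x]$ and $d\subseteq N_H^+[y]$, i.e.\ iff $c\in N_{G/S}^+[a]$ and $d\in N_{H/S}^+[b]$. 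On the other hand, the neighborhood product formula applied to $G/S\boxtimes H/S$ yields $(c,d)\in N_{G/S\boxtimes H/S}^+[(a,b)]$ iff exactly these two conditions hold. Since an arc $uv$ is present precisely when $v$ lies in the closed out-neighborhood of $u$ and $u\neq v$, and since $\varphi$ is a bijection and hence respects distinctness of classes, this shows $\varphi$ preserves and reflects arcs, completing the isomorphism.

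I expect the main obstacle to be the bookkeeping around the quotient neighborhood description together with the loop case. Concretely, I must justify $c\in N_{G/S}^+[a]\iff c\subseteq N_G^+[x]$ uniformly across the subcases $a=c$ and $a\neq c$: when $a=c$ the containment $S_G(x)\subseteq N_G^+[x]$ follows because the members of an $S$-class induce a complete subgraph by Lemma~\ref{lem:eqcl-complete}, whereas when $a\neq c$ the equivalence reduces to $x'\in N_G^+[x]$ for representatives, which is exactly $ac\in E(G/S)$. Making this representative-independent on both factors is where the $S^+$ and $S^-$ uniformity is genuinely used, and it is the point that must be spelled out with care.
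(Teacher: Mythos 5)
Your proof is correct and follows essentially the same route as the paper: the paper's proof is only a pointer to the undirected analogue (Lemma~7.2 in the Handbook of Product Graphs) together with Lemma~\ref{lem:eqcl-complete}, and what you have written out --- the canonical class-to-pair-of-classes bijection, the factorization of $S$-classes of the product via $N^{+}_{G\boxtimes H}[(x,y)]=N_G^+[x]\times N_H^+[y]$ and its $N^-$ counterpart, and the use of Lemma~\ref{lem:eqcl-complete} to handle the $a=c$ case of the quotient-adjacency description --- is exactly that argument carried out explicitly in the directed setting. No gaps; your treatment of the loop/distinctness bookkeeping is the detail the paper leaves implicit.
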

\begin{proof}
Reasoning analogously as in the proof for undirected graphs in \cite[Lemma 7.2]{Hammack:2011a}, 
and by usage of Lemma \ref{lem:eqcl-complete} we obtain the desired result.
\end{proof}

\section{Dispensability and the Cartesian Skeleton}
\label{sec:CartSk}

A central tool for our PFD algorithms for connected digraphs $G$ 
is the \emph{Cartesian skeleton $\skel(G)$}.
The PFD of $\skel(G)$ w.r.t.\  the Cartesian product is utilized to
infer the prime factors w.r.t.\  the strong product of $G$.
This concept was first introduced for undirected graphs by Feigenbaum and
Sch{\"a}ffer in \cite{FESC-92} and later on improved by Hammack and Imrich,
see \cite{HAIM-09}. Following the illuminating approach of Hammack and Imrich, one
removes edges in $G$ that fulfill so-called dispensability conditions,
resulting in a subgraph $\skel(G)$ that is the desired Cartesian skeleton.
In this paper, we provide generalized 
dispensability conditions and thus, a general definition of 
the Cartesian skeleton of digraphs. 
For this purpose we first give the definitions of the so-called 
(weak) $N^+$-condition and  $N^-$-condition. Based on 
this, we will provide a general concept of dispensability
for digraphs, which in turn enables us to define the Cartesian skeleton $\skel(G)$. 
We prove that $\skel(G)$ is a connected
spanning subgraph, provided $G$ is connected. Moreover for $S$-thin
digraphs the Cartesian skeleton is uniquely determined and we obtain 
$\skel(H\boxtimes K) \simeq \skel(H)\Box \skel(K)$.

\begin{defi}
Let $G$ be a digraph and
 $xy\in E(G)$, $z\in V(G)$ be an arbitrary edge, resp, vertex of $G$.
We say $xy$ satisfies the 
\emph{$N^+$-condition with $z$} if one of the
following conditions is fulfilled:
\begin{enumerate}[(1$^+$)]
\item $N_G^+[x]\subset N_G^+[z] \subset N_G^+[y]$ 
\item $N_G^+[y]\subset N_G^+[z] \subset N_G^+[x]$ 
\item $N_G^+[x]\cap N_G^+[y] \subset N_G^+[x]\cap N_G^+[z]$ 
			 and $N_G^+[x]\cap N_G^+[y] \subset N_G^+[y]\cap N_G^+[z]$ 
\end{enumerate}

We say $xy$ satisfies the \emph{weak $N^+$-condition with $z$}, if the
following condition is fulfilled:
\begin{enumerate}
\item[] $N_G^+[x]\cap N_G^+[y] \subseteq N_G^+[x]\cap N_G^+[z]$ 
			  and $N_G^+[x]\cap N_G^+[y] \subseteq N_G^+[y]\cap N_G^+[z]$
\end{enumerate}
Analogously, by replacing ``$N_G^+$'' by ``$N_G^-$'' we get Conditions (1$^-$),(2$^-$),(3$^-$), 
for the definition of the \emph{$N^-$-condition with $z$}, respectively, for the definition
of the \emph{weak $N^-$-condition with $z$}.
\end{defi}

\begin{defi}
Let $G$ be a digraph. An edge $xy\in E(G)$ is
\emph{dispensable} if at least one of the following conditions is satisfied:
\begin{enumerate}[(D1)]
	\item There exists a vertex $z\in V(G)$ such that $xy$ satisfies the $N^+$- and
	      	$N^-$-condition with $z$.
	\item There are vertices $z_1,z_2 \in V(G)$ such that both conditions holds:
	\begin{enumerate}[(a)]
		\item $xy$ satisfies $(3^+)$ of the $N^+$-condition with $z_1$ and the 
					 weak $N^-$-condition with $z_1$. 
		\item $xy$ satisfies  $(3^-)$ of the $N^-$-condition with $z_2$ and the 
					 weak $N^+$-condition with $z_2$. 
	\end{enumerate} 
	\item There exists a vertex $z\in V(G)$ such that $xy$ satisfies
	      the $N^+$-condition with $z$ and at least one of the following
	      holds: $N^-[x]=N^-[z]$ or $N^-[y]=N^-[z]$.
	\item There exists a vertex $z\in V(G)$ such that $xy$ satisfies
	      the $N^-$-condition with $z$ and at least one of the following
	      holds: $N^+[x]=N^+[z]$ or $N^+[y]=N^+[z]$.
	\item There are distinct vertices $z_1,z_2 \in V(G)$, both distinct from
	      $x$ and $y$, such that $N^+[x]=N^+[z_1]$, $N^-[x]=N^-[z_2]$,
	      $N^-[z_1]=N^-[y]$ and $N^+[z_2]=N^+[y]$.
\end{enumerate}
All other edges in $E(G)$ are \emph{non-dispensable}. 
\end{defi}

Note, if one considers undirected graphs $G=(V,E)$ as graphs for which
$N^+[v]=N^-[v]$	for all $v\in V$, then
none of the Conditions $(D2)$-$(D4)$ can be fulfilled for $G$.
Moreover if this undirected graph is thin, then Condition $(D5)$ 
cannot be satisfied. 
In other words, the definition of dispensability 
reduces to $(D1)$ and thus, 
coincides with that for undirected graphs given by Hammack 
and Imrich \cite{HAIM-09}.

\begin{rem}
	\label{rem:edge}
	Let $G = (V,E)$ be a digraph and assume the edge $xy\in E$
	is dispensable by one of the Conditions $(D1)$, $(D3)$
	$(D4)$ with	some vertex $z\in V$ or $(D2)$, $(D5)$
	with some  $z_1,z_2\in V$. It is now an easy task to verify that
	$z\in N^+[x]\cup N^-[x]$ and	$z\in N^+[y]\cup N^-[y]$.
	The same is true for $z_1$ and $z_2$.
\end{rem}


We are now in the position to define the Cartesian skeleton
of digraphs.

\begin{defi}
The \emph{Cartesian skeleton} of a digraph $G$ is the digraph $\skel(G)$
that is obtained from $G$ by removing all dispensable edges. 
More precise, the Cartesian skeleton $\skel(G)$ has vertex set $V(G)$ and
edge set $E(\skel(G))=E(G)\setminus D(G)$, where $D(G)$ denotes
the set of dispensable edges in $G$.
\end{defi}

\begin{figure}[tbp]
  \centering
	\includegraphics[bb=26 321 444 648, scale=0.6]{./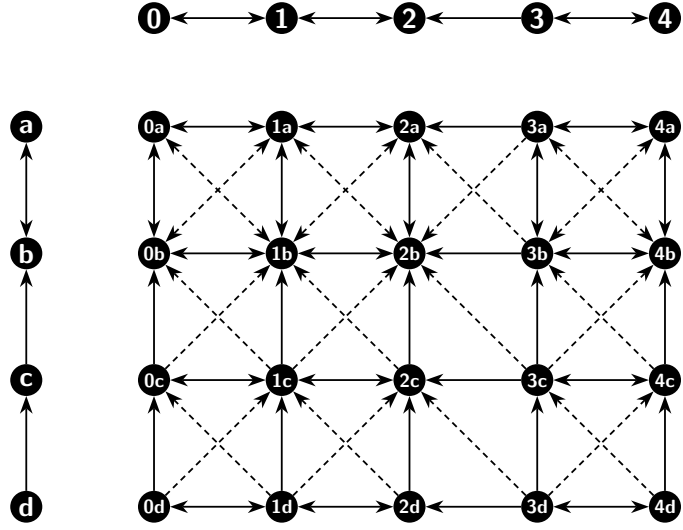}
	\caption{Shown is the strong product of two thin digraphs $G_1$ and $G_2$. 
						The dashed edges are dispensable and thus, $\skel(G_1\boxtimes G_2)
						=\skel(G_1)\Box \skel(G_2)$ is the subgraph that contains
						all non-dashed edges. By way of example, the edge 
						$(0d)(1c)$ satisfies $(D1)$ with $z=(1d)$; 
						the edge $(2d)(1c)$ satisfies $(D2)$ with $z_1=(1d), z_2=(2c)$;
						the edge $(3c)(4b)$ satisfies $(D3)$ with $z=(3b)$; 
						the edge $(3a)(2b)$ satisfies $(D4)$ with $z=(3b)$;
						and the edge $(4a)(3b)$ satisfies $(D5)$ with $z_1=(4b), z_2=(3a)$.}
	\label{fig:strong}
\end{figure}

In the following, we will show that non-Cartesian edges are dispensable and
moreover that $\skel(H \boxtimes K) = \skel(H)\,\Box \, \skel(K)$, whenever
$H$ and $K$ are thin graphs. 

\begin{lem}
Let $G=H\boxtimes K$ be a thin digraph. Then every non-Cartesian edge is dispensable
and thus, every edge of  $\skel(G)$ is Cartesian w.r.t.\   this factorization. 
\label{lem:nonCdisp}
\end{lem}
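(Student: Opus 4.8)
The plan is to take a non-Cartesian edge and write it as $x=(x_1,x_2)$, $y=(y_1,y_2)$ with $x_1y_1\in E(H)$ and $x_2y_2\in E(K)$, and to exhibit a dispensability witness among the two ``corner'' vertices $z=(y_1,x_2)$ and $z'=(x_1,y_2)$. The whole argument rests on the product formula $N_G^{\pm}[(a,b)]=N_H^{\pm}[a]\times N_K^{\pm}[b]$: applying it to $x$, $y$ and the corners turns every neighborhood occurring in the dispensability conditions into a Cartesian product, and since intersections factor coordinatewise, each inclusion of the form $N^{+}[\cdot]\cap N^{+}[\cdot]\subseteq N^{+}[\cdot]\cap N^{+}[\cdot]$ splits into one inclusion in $H$ and one in $K$ among the four sets $N_H^{\pm}[x_1],N_H^{\pm}[y_1]$ and $N_K^{\pm}[x_2],N_K^{\pm}[y_2]$. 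Because $x_1y_1\in E(H)$ and $x_2y_2\in E(K)$ we have $y_1\in N_H^{+}[x_1]$, $x_1\in N_H^{-}[y_1]$, and so on, so all the intersections in sight are nonempty; this is what forces the inclusions to be strict when they must be.

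First I would record two observations that do most of the book-keeping. (i) For either corner the weak $N^{+}$- and weak $N^{-}$-conditions hold automatically, since $N_G^{+}[x]\cap N_G^{+}[y]$ factors as $(N_H^{+}[x_1]\cap N_H^{+}[y_1])\times(N_K^{+}[x_2]\cap N_K^{+}[y_2])$ and is trivially contained in $N_G^{+}[z]=N_H^{+}[y_1]\times N_K^{+}[x_2]$ (and likewise for $z'$ and for the $N^{-}$-side); hence the weak parts of (D2) cost nothing. (ii) The $N^{+}$-condition is satisfiable by one of the two corners if and only if $N_H^{+}[x_1]\neq N_H^{+}[y_1]$ \emph{and} $N_K^{+}[x_2]\neq N_K^{+}[y_2]$; moreover, when the two out-neighborhood pairs are strictly nested in the same direction one gets $(1^+)$ or $(2^+)$ with \emph{both} corners, while in all remaining (mixed or incomparable) subcases $(3^+)$ holds with a suitable corner. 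The symmetric statement holds for the $N^{-}$-condition. Verifying (ii) — in particular the strictness of the inclusions and the handling of the incomparable subcases — is the computational heart of the proof and the step I expect to be the main obstacle.

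With these in hand the conclusion follows from a short case analysis driven by the four equalities $N_H^{+}[x_1]=N_H^{+}[y_1]$, $N_K^{+}[x_2]=N_K^{+}[y_2]$, $N_H^{-}[x_1]=N_H^{-}[y_1]$, $N_K^{-}[x_2]=N_K^{-}[y_2]$, under the thinness constraints supplied by Lemma \ref{lem:thinFactors}: since $x_1\neq y_1$ the first and third cannot hold simultaneously, and since $x_2\neq y_2$ neither can the second and fourth. If none of the four equalities holds, then by (ii) both the $N^{+}$- and the $N^{-}$-condition are corner-satisfiable: when one of them is realized by $(1^{\pm})$ or $(2^{\pm})$ it holds with both corners, so the corner realizing the other condition gives a single $z$ satisfying (D1); otherwise both are realized only by $(3^{+})$, resp.\ $(3^{-})$, and together with observation (i) this is exactly (D2). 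If precisely one of the two conditions fails to be corner-satisfiable, then by (ii) an out- (resp.\ in-) neighborhood equality is available; combining it with the surviving condition on the same corner yields (D4) (resp.\ (D3)), because the extra requirement of those conditions, $N^{+}[x]=N^{+}[z]$ or $N^{+}[y]=N^{+}[z]$ (resp.\ the $N^{-}$-analogue), translates back into precisely one of the two out- (resp.\ in-) neighborhood equalities and is independent of which corner is used.

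Finally, if \emph{both} conditions fail to be corner-satisfiable, the thinness constraints force the pattern of equalities to be exactly ``$N_H^{+}[x_1]=N_H^{+}[y_1]$ and $N_K^{-}[x_2]=N_K^{-}[y_2]$'' or its mirror ``$N_K^{+}[x_2]=N_K^{+}[y_2]$ and $N_H^{-}[x_1]=N_H^{-}[y_1]$'', since every other pattern either violates thinness or falls back into the previous cases. In both situations the two corners serve as the vertices $z_1,z_2$ required by (D5), and one checks that they are distinct from each other and from $x$ and $y$ because $x_1\neq y_1$ and $x_2\neq y_2$. The only place where thinness is genuinely indispensable is this last step: it is exactly what excludes the one pattern — both the out- and the in-neighborhoods of a single factor coordinate coinciding — for which none of (D1)--(D5) would apply.
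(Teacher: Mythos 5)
Your proposal is correct and takes essentially the same route as the paper's own proof: the two corner vertices $(y_1,x_2)$ and $(x_1,y_2)$ as dispensability witnesses, coordinatewise factoring of the closed neighborhoods, and a case analysis driven by the four possible equalities $N_H^{\pm}[x_1]=N_H^{\pm}[y_1]$, $N_K^{\pm}[x_2]=N_K^{\pm}[y_2]$, with thinness of the factors (Lemma \ref{lem:thinFactors}) ruling out a same-factor $+$/$-$ coincidence, the no-equality case giving $(D1)$/$(D2)$, the one-sided cases giving $(D3)$/$(D4)$, and the crossed cases giving $(D5)$ with the two corners. The step you deferred as the ``computational heart'' --- your observation (ii) --- is exactly what the paper verifies via its four inclusions and the ensuing Cases 1--4, and it holds precisely as you stated it, including that $(1^+)$/$(2^+)$ arise only when the out-neighborhood pairs are nested in the same direction and then hold with both corners.
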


\begin{proof}
Suppose that the edge $(h,k)(h',k')\in E(G)$ is non-Cartesian. 
We have to examine several cases.

Assume $N_H^+[h]\neq N_H^+[h']$ and $N_K^+[k]\neq N_K^+[k']$. Then
\begin{align}
N^+_G[(h,k)]\cap N^+_G [(h',k')] &= 
	(N_H^+[h] \cap N_H^+ [h']) \times (N_K^+[k] \cap N_K^+ [k']) \notag\\
& \subseteq N_H^+[h]  \times (N_K^+[k] \cap N_K^+ [k']) \notag\\
& = N^+_G[(h,k)]\cap N^+_G [(h,k')] \label{incl1}
\end{align}
\begin{align}
N^+_G[(h,k)]\cap N^+_G [(h',k')] &= (N_H^+[h] \cap N_H^+ [h']) \times (N_K^+[k] \cap N_K^+ [k']) \notag\\
& \subseteq (N_H^+[h] \cap N_H^+ [h']) \times N_K^+ [k'] \notag\\
& = N^+_G[(h,k')]\cap N^+_G [(h',k')] \label{incl2}
\end{align}

Interchanging the roles of $h$ and $k$ with $h'$ and $k'$ gives us by similar arguments:
\begin{align}
N^+_G[(h',k')]\cap N^+_G [(h,k)] &\subseteq N^+_G[(h',k')]\cap N^+_G [(h',k)] \label{incl3} \text{ and }\\
N^+_G[(h',k')]\cap N^+_G [(h,k)] &\subseteq N^+_G[(h',k)]\cap N^+_G [(h,k)]. \label{incl4}
\end{align}

Notice that $N^+_G[(h,k)]\cap N^+_G [(h',k')]\neq \emptyset$, since 
 $(h,k)(h',k')\in E(G)$ implies that $(h',k')\in N^+_G[(h,k)]\cap N^+_G [(h',k')]$.
The following four cases can occur:

\begin{enumerate}
\item All inclusions in Eq.\ \eqref{incl1}\ -\ \eqref{incl4} are inequalities, thus $(h,k)(h',k')$ satisfies $(3^+)$
	    of the $N^+$-condition  with $z$ by choosing $z=(h,k')$ or
      $z=(h',k)$.
\item Only the first two inclusions (Eq.\ \eqref{incl1}\ -\ \eqref{incl2}) are inequalities, thus $(h,k)(h',k')$
      satisfies $(3^+)$ of the $N^+$-condition with $z=(h,k')$ and 
			the weak $N^+$-condition with $z=(h',k)$.
\item Symmetrically, if only the last two inclusions (Eq.\ \eqref{incl3}\ -\ \eqref{incl4}) are inequalities, then
      $(h,k)(h',k')$ satisfies $(3^+)$ of the $N^+$-condition with
      $z=(h',k)$ and the weak $N^+$-condition with $z=(h,k')$.	
\item At least one of the first two and one of last two inclusions are
      equality. 
			From the first two formulas we get $N_H^+[h]\cap N_H^+[h'] = N_H^+[h]$ 
			or $N_K^+[k]\cap N_K^+[k'] = N_K^+ [k']$. Due to the
			assumption $N_K^+[h]\neq N_K^+[h']$ and $N_K^+[k]\neq N_K^+[k']$
			this implies
			$$ N_H^+[h] \subset N_H^+[h'] \textrm{  or  } N_K^+[k'] \subset N_K^+[k].$$
			Similarly we get from the last two formulas
			$$ N_H^+[h'] \subset N_H^+[h] \textrm{  or  }   N_K^+[k] \subset N_K^+[k'].$$
			This implies we have
	$$ N_H^+[h] \subset N_H^+[h'] \textrm{  and  }   N_K^+[k] \subset N_K^+[k'] 
	\textrm{\ or\ } 
  	N_H^+[h'] \subset N_H^+[h] \textrm{  and  }   N_K^+[k'] \subset N_K^+[k]$$
	and thus 
		$$N^+_G[(h,k)] \subset N^+_G[(h,k')] \subset N^+_G[(h',k')]
		\textrm{ and } N^+_G[(h,k)] \subset N^+_G[(h',k)] \subset
		N^+_G[(h',k')]$$
		or
		$$N^+_G[(h',k')] \subset N^+_G[(h,k')] \subset N^+_G[(h,k)] \textrm{
		and } N^+_G[(h',k')] \subset N^+_G[(h',k)] \subset N^+_G[(h,k)].$$
		Therefore, also in this case $(h,k)(h',k')$ satisfies the
		$N^+$-condition with $z=(h,k')$ and with $z=(h',k)$.
\end{enumerate}

So far we treated the $N^+$-neighborhoods under the assumption that
$N_H^+[h]\neq N_H^+[h']$ and $N_K^+[k]\neq N_K^+[k']$. For the $N^-$-
neighborhoods the situation can be treated analogously, if we assume that
$N_H^-[h]\neq N_H^-[h']$ and $N_K^-[k]\neq N_K^-[k']$. Then, we obtain
the same latter four cases just by replacing $N_H^+$ and $N_K^+$, by
$N_H^-$ and $N_K^-$, respectively. Now, it is easy to verify that 
every combination of the Cases $1.$\ -\ $4.$ for $N^+$- and $N^-$-neighborhoods
leads to one of the conditions $(D1)$ or $(D2)$. 

Assume that $N_H^+[h]= N_H^+[h']$ and $N_K^-[k] = N_K^-[k']$. 
Then Condition $(D5)$ holds for the edge $(h,k)(h',k')$ with $z_1=(h',k)$ and
$z_2=(h,k')$. Analogous arguments show that Condition $(D5)$ is satisfied, if 
$N_H^-[h]= N_H^-[h']$ and $N_K^+[k] = N_K^+[k']$.

Finally, assume that $N_H^+[h]= N_H^+[h']$ and $N_K^-[k] \neq N_K^-[k']$. 
By thinness it must hold $N_H^-[h]\neq N_H^-[h']$. 
Thus, we have the Cases $1.$\ -\ $4.$ for $N^-$-neighborhoods.
In particular, for all four cases we can infer that the edge $(h,k)(h',k')$ 
satisfies the $N^-$-condition with vertex $(h,k')$ or $(h',k)$. 
Hence, Condition $(D4)$ is satisfied 
since $N_G^+[(h,k)]= N_G^+[(h',k)]$ and $N_G^+[(h,k')]= N_G^+[(h',k')]$.
If  $N_H^+[h] \neq N_H^+[h']$ and $N_K^-[k] = N_K^-[k']$ then 
we obtain by similar arguments, that $(D3)$ is satisfied. 

Hence, in all cases we can observe that non-Cartesian edges fulfill
one of the Condition $(D1)-(D5)$ and are thus, dispensable. 
\end{proof}

\begin{lem}
If $H$, $K$ are thin digraphs,
then $\skel(H \boxtimes K) \subseteq \skel(H)\,\Box \, \skel(K)$.
\label{lem:skel1}
\end{lem}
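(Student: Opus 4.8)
The plan is to reduce the inclusion to a statement about the factor projections of Cartesian edges and then to \emph{lift} dispensability from a factor to the product. First I would invoke Lemma \ref{lem:nonCdisp}: since $H$ and $K$ are thin, Lemma \ref{lem:thinFactors} ensures that $G=H\boxtimes K$ is thin, so every edge surviving in $\skel(G)$ is Cartesian. Hence an arbitrary edge $e\in E(\skel(G))$ has, after possibly swapping the roles of $H$ and $K$, the form $e=(h,k)(h',k)$ with $hh'\in E(H)$. Because $V(\skel(H)\Box\skel(K))=V(H)\times V(K)=V(G)$, such an $e$ lies in $\skel(H)\Box\skel(K)$ precisely when $hh'\in E(\skel(H))$, i.e.\ when $hh'$ is non-dispensable in $H$. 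So it suffices to prove the contrapositive: if $hh'$ is dispensable in $H$, then $(h,k)(h',k)$ is dispensable in $G$ for every $k\in V(K)$.

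The engine of the proof is the neighborhood identity $N_G^{+}[(v,k)]=N_H^{+}[v]\times N_K^{+}[k]$ and $N_G^{-}[(v,k)]=N_H^{-}[v]\times N_K^{-}[k]$, together with the elementary observation that, since $N_K^{+}[k]$ and $N_K^{-}[k]$ are nonempty, strict inclusions are preserved under taking products with them: $A\subset B$ implies $A\times N_K^{+}[k]\subset B\times N_K^{+}[k]$, weak inclusions lift by $A\subseteq B\Rightarrow A\times N_K^{+}[k]\subseteq B\times N_K^{+}[k]$, and equalities lift trivially. Consequently, if $hh'$ satisfies the $N^+$-condition (any of $(1^+),(2^+),(3^+)$), the weak $N^+$-condition, or an equality $N_H^{-}[h]=N_H^{-}[z]$ with a witness $z\in V(H)$, then $(h,k)(h',k)$ satisfies the corresponding condition in $G$ with the witness $(z,k)$; the same holds verbatim for the $N^-$ versions. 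I would isolate these lifting facts once and reuse them throughout.

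With this lifting principle in hand, I would run through the five dispensability conditions in turn, replacing each witness $z$ (resp.\ $z_1,z_2$) for $hh'$ by $(z,k)$ (resp.\ $(z_1,k),(z_2,k)$): $(D1)$ lifts the $N^+$- and $N^-$-conditions with the single witness $(z,k)$; $(D2)$ lifts its two half-conditions $(a)$ and $(b)$ with $(z_1,k)$ and $(z_2,k)$; $(D3)$ and $(D4)$ combine a lifted neighborhood condition with a lifted equality; and $(D5)$ lifts its four equalities, where one must additionally check that $(z_1,k)$ and $(z_2,k)$ stay distinct from each other and from $(h,k),(h',k)$, which follows immediately from $z_1\neq z_2$ and $z_1,z_2\notin\{h,h'\}$. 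In every case the lifted clauses are exactly the defining clauses of the same dispensability condition for $(h,k)(h',k)$, so this edge is dispensable in $G$, proving the contrapositive. The symmetric argument applied to type-(ii) edges (treating $K$ in place of $H$) then completes the proof.

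The main obstacle I anticipate is bookkeeping rather than conceptual: one must verify that every clause of every condition lifts with the correct witness and that strict versus weak inclusions are tracked faithfully, which is exactly where the nonemptiness of $N_K^{+}[k]$ and $N_K^{-}[k]$ is used; in $(D5)$ one must also not forget the distinctness requirements on the witnesses. Once the product identity for neighborhoods and the preservation of strict inclusion are recorded, each of $(D1)$ through $(D5)$ reduces to a mechanical substitution.
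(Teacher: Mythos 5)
Your proposal is correct and follows essentially the same route as the paper: reduce to Cartesian edges via Lemma \ref{lem:nonCdisp}, then prove the contrapositive by lifting each dispensability condition $(D1)$--$(D5)$ for $hh'$ in $H$ to the corresponding condition for $(h,k)(h',k)$ in $H\boxtimes K$, with witness $z$ replaced by $(z,k)$, using the product identity for neighborhoods. Your explicit recording of the lifting principle (strict and weak inclusions and equalities are preserved under taking products with the nonempty sets $N_K^{+}[k]$, $N_K^{-}[k]$) and the distinctness check in $(D5)$ are exactly the points the paper's proof uses, stated there case by case rather than as a separate observation.
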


\begin{proof}
In the following, we will denote in some cases for simplicity the
product $H \boxtimes K$ by $G$. By Lemma \ref{lem:nonCdisp}, the
subgraph $\skel(H \boxtimes K)$ contains Cartesian edges only. Hence, by
commutativity of the Cartesian product, it remains to show that for
every non-dispensable Cartesian edge $(h,k)(h',k)$ contained in $\skel(H
\boxtimes K)$, there is an edge $hh' \in \skel(H)$ and thus $(h,k)(h',k)$
is also contained in $\skel(H)\,\Box \, \skel(K)$.

By contraposition, assume that $hh'$ is dispensable in $H$, that is, one of
the Conditions $(D1)$-$(D5)$ is fulfilled. 

Assume $(D1)$ holds for $hh'$ with some $z\in V(H)$. Then one of the
following conditions holds 
$(1^+)$ $N_H^+[h]\subset N_H^+[z] \subset N_H^+[h']$, $(2^+)$
$N_H^+[h']\subset N_H^+[z] \subset N_H^+[h]$ or 
$(3^+)$  $N_H^+[h]\cap N_H^+[h'] \subset N_H^+[h]\cap N_H^+[z]$ and
$N_H^+[h]\cap N_H^+[h'] \subset N_H^+[h']\cap N_H^+[z]$.
If we multiply every neighborhood in the inclusions with $N_K^+[k]$ we get a 
$N^+$-condition for $(h,k)(h',k)$ with $(z,k)$. Analogously, if $hh'$
satisfies the $N^-$-condition with $z\in V(H)$, then $(h,k)(h',k)$
satisfies $N^-$-condition with $(z,k)$. Thus Condition $(D1)$ for
$hh'$ implies $(D1)$ for $(h,k)(h',k)$.

Assume $(D2)$ holds for $hh'$. Hence there are vertices $z_1,z_2\in
V(H)$ s.t. $hh'$ satisfies $(3^+)$ of the $N^+$-condition with $z_1$
and the weak $N^-$-condition with $z_1$, as well as, the $(3^-)$ of
the $N^-$-condition with $z_2$ and the weak $N^+$-condition with
$z_2$. As argued before, the edge $(h,k)(h',k)$ satisfies $(3^+)$ of
the $N^+$-condition with $(z_1,k)$ and $(3^-)$ of the $N^-$-condition
with $(z_2,k)$. For $hh'$ and the weak $N^-$-condition holds
$N_H^-[h]\cap N_H^-[h'] \subseteq N_H^-[h]\cap N_H^-[z_1]$ and  
$N_H^-[h]\cap N_H^-[h'] \subseteq N_H^-[h']\cap N_H^-[z_1]$.
Again, if we multiply every inclusion with $N^-[k]$ we can infer that 
 $$N_G^-[(h,k)]\cap N_G^-[(h',k)] \subseteq N_G^-[(h,k)]\cap N_G^-[(z_1,k)]$$
  and
  $$N_G^-[(h,k)]\cap N_G^-[(h',k)] \subseteq N_G^-[(h',k)]\cap N_G^-[(z_1,k)].$$
Thus Item $(a)$ of Condition $(D2)$ is satisfied for $(h,k)(h',k)$
with $(z_1,k)$. By analogous arguments, we derive that Item $(b)$ of
Condition $(D2)$ is satisfied for $(h,k)(h',k)$ with $(z_2,k)$. Hence,
Condition $(D2)$ for $hh'$ implies that $(D2)$ holds for
$(h,k)(h',k)$.

For Condition $(D3)$, resp., $(D4)$ we can infer by the preceding
arguments, that the $N^+$-condition, resp., $N^-$-condition for
$(h,k)(h',k)$ with $(z,k)$ is fulfilled, whenever these conditions are
satisfied for $hh'$ with $z$. Now, $N_H^-[h]=N_H^-[z]$ or
$N_H^-[h']=N_H^-[z]$ implies $N_G^-[(h,k)]=N_G^-[(z,k)]$ or
$N_G^-[(h',k)]=N_G^-[(z,k)]$ and similarly this holds for $N^+$-
neighborhoods. Hence $(D3)$, resp., $(D4)$ are fulfilled for the edge
$(h,k)(h',k)$. 

Finally, consider Condition $(D5)$. Assume there are distinct vertices
$z_1,z_2 \in V(G)$ such that $N_H^+[h]=N_H^+[z_1]$,
$N_H^-[h]=N_H^-[z_2]$, $N_H^-[z_1]=N_H^-[h']$ and
$N_H^+[z_2]=N_H^+[h']$. This implies that
$N_G^+[(h,k)]=N_G^+[(z_1,k)]$, $N_G^-[(h,k)]=N_G^-[(z_2,k)]$,
$N_G^-[(z_1,k)]=N_G^-[(h',k)]$ and $N_G^+[(z_2,k)]=N_G^+[(h',k)]$ and
therefore, Condition $(D5)$ is fulfilled for the edge $(h,k)(h',k)$. 

To summarize, if $hh'$ is dispensable then $(h,k)(h',k)$ is
dispensable and hence, $\skel(H \boxtimes K) \subseteq \skel(H)\,\Box \,
\skel(K)$.
\end{proof}

\begin{prop}
If $H$, $K$ are thin graphs, then $\skel(H \boxtimes K) = \skel(H)\, \square
\, \skel(K)$.
\label{prop:skelProd}
\end{prop}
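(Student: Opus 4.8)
The plan is to combine the inclusion already established in Lemma~\ref{lem:skel1} with its converse. Since Lemma~\ref{lem:skel1} gives $\skel(H\boxtimes K)\subseteq\skel(H)\Box\skel(K)$, it remains to prove $\skel(H)\Box\skel(K)\subseteq\skel(H\boxtimes K)$. By definition of the Cartesian product every edge of $\skel(H)\Box\skel(K)$ is Cartesian, and by commutativity of both products it suffices to treat an edge of the form $(h,k)(h',k)$ with $hh'\in E(\skel(H))$, i.e.\ with $hh'$ non-dispensable in $H$; the vertical case is symmetric. Writing $G=H\boxtimes K$, I would argue by contraposition: assuming $(h,k)(h',k)$ is dispensable in $G$, I will exhibit a dispensability witness for $hh'$ in $H$, which contradicts $hh'\in E(\skel(H))$.

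The technical heart is a projection principle that inverts the lifting used in Lemma~\ref{lem:skel1}. Using $N_G^{\pm}[(a,b)]=N_H^{\pm}[a]\times N_K^{\pm}[b]$, I would show that if the ``horizontal'' edge $(h,k)(h',k)$ satisfies any one of the atomic conditions---$(1^{\pm})$, $(2^{\pm})$, $(3^{\pm})$, a weak $N^{\pm}$-condition, or an equality $N^{\pm}[\cdot]=N^{\pm}[z]$---with a witness $z=(z_1,z_2)$ in $G$, then $hh'$ satisfies the corresponding atomic condition with the projected witness $z_1$ in $H$. The key observation making this work is that both endpoints share the same second coordinate $k$, so the $K$-factor appearing in the relevant neighborhoods is the common set $N_K^{\pm}[k]$; when one intersects or compares the product sets, this common factor collapses (for instance $(1^{+})$ forces $N_K^{+}[z_2]=N_K^{+}[k]$, while $(3^{+})$ forces $N_K^{+}[k]\cap N_K^{+}[z_2]=N_K^{+}[k]$), so that every strict inclusion in $G$ is necessarily strict already in the $H$-coordinate. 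With this projection principle in hand, Conditions $(D1)$--$(D4)$ descend immediately: one simply replaces each product witness $z=(z_1,z_2)$ (resp.\ the pair $z_1,z_2$) by its $H$-coordinate and reads off the very same condition for $hh'$.

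The step I expect to be the main obstacle is $(D5)$, because it carries genuine distinctness requirements that do not obviously survive projection: the witnesses for the product edge must be distinct from each other and from the endpoints, and I must recover the analogous distinctness for their $H$-coordinates. Here I would exploit thinness of $K$. Writing the two witnesses as $(a_1,a_2)$ and $(b_1,b_2)$ and decomposing the four defining equalities coordinate-wise, the equalities of out- and in-neighborhoods force simultaneously $N_K^{+}[a_2]=N_K^{+}[k]$ and $N_K^{-}[a_2]=N_K^{-}[k]$, and likewise for $b_2$; since $K$ is thin this yields $a_2=b_2=k$. Consequently the distinctness of $(a_1,k),(b_1,k)$ from each other and from $(h,k),(h',k)$ collapses exactly to $a_1\neq b_1$ and $a_1,b_1\notin\{h,h'\}$, and the four projected equalities give $(D5)$ for $hh'$ with witnesses $a_1,b_1$. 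This contradicts non-dispensability of $hh'$ in $H$ and completes the reverse inclusion, establishing the claimed equality.
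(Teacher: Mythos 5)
Your proposal is correct and follows essentially the same route as the paper: one inclusion comes from Lemma~\ref{lem:skel1}, and the reverse inclusion is proved by contraposition, projecting every dispensability witness $z=(z_1,z_2)$ for the edge $(h,k)(h',k)$ in $H\boxtimes K$ onto its $H$-coordinate, using exactly the observation that the common factor $N_K^{\pm}[k]$ forces $N_K^{\pm}[z_2]\cap N_K^{\pm}[k]=N_K^{\pm}[k]$ so that strict inclusions survive in the $H$-coordinate. The only divergence is your treatment of the distinctness requirements in $(D5)$: you invoke thinness of $K$ to pin the witnesses' second coordinates to $k$, whereas the paper derives contradictions with thinness of the product $H\boxtimes K$ (available via Lemma~\ref{lem:thinFactors}); both arguments are valid, and yours is, if anything, slightly cleaner.
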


\begin{proof}
By Lemma \ref{lem:skel1}, it remains to prove that $\skel(H)\, \square \,
\skel(K)\subseteq \skel(H \boxtimes K)$. Moreover, by commutativity of the
products, we must only show that for every edge $(h,k)(h',k)\in
E(\skel(H)\, \square \, \skel(K))$ holds that $(h,k)(h',k)$ is not dispensable
in $H \boxtimes K$. 

For contraposition, assume $(h,k)(h',k)$ is dispensable in $H
\boxtimes K$. We will prove that then $hh'$ is dispensable in $H$.
In the following, we will denote in some cases for simplicity the
product $H \boxtimes K$ by $G$.

Let us assume that Condition $(D1)$ holds for $(h,k)(h',k)$ with
$z=(z',z'')$. If Condition $(1^+)$ is fulfilled then
$N_G^+[(h,k)]\subset N_G^+[(z',z'')] \subset N_G^+[(h',k)]$ and we get
$$N_H^+[h]\times N_K^+[k]\subset N_H^+[z']\times N_K^+[z''] \subset
N_H^+[h']\times N_K^+[k].$$ The latter implies that 
$N_K^+[z'']=N_K^+[k]$, which causes $N_H^+[h]\subset N_H^+[z'] \subset
N_H^+[h']$ and hence, $(1^+)$ is fulfilled in $H$ for $hh'$ with $z'$.
If Condition $(2^+)$ is fulfilled, then analogous arguments show that
$N_H^+[h']\subset N_H^+[z'] \subset N_H^+[h]$.
assume now that Condition $(3^+)$  holds: 
$N_G^+[(h,k)]\cap N_G^+[(h',k)] \subset N_G^+[(h,k)]\cap N_G^+[(z',z'')]$ and 
$N_G^+[(h,k)]\cap N_G^+[(h',k)] \subset N_G^+[(h',k)]\cap N_G^+[(z',z'')]$. 
Therefore, 
$$(N_H^+[h]\cap N_H^+[h'])\times N_K^+[k] \subset (N_H^+[h]\cap N_H^+[z']) \times (N_K^+[k]\cap N_K^+[z''])\ $$ and
$$(N_H^+[h]\cap N_H^+[h'])\times N_K^+[k] \subset (N_H^+[h']\cap N_H^+[z']) \times (N_K^+[k]\cap N_K^+[z'']).$$
Since $hh'\in E(H)$, we can conclude that $N_H^+[h]\cap N_H^+[h']\neq \emptyset$. 
Hence, the latter implies that $N_K^+[k]\subseteq N_K^+[k]\cap N_K^+[z'']$ and
thus, $N_K^+[k]=N_K^+[k]\cap N_K^+[z'']$. Then it must holds that
$N_H^+[h]\cap N_H^+[h'] \subset N_H^+[h]\cap N_H^+[z']$ and
$N_H^+[h]\cap N_H^+[h'] \subset N_H^+[h']\cap N_H^+[z']$, which yields
$(3^+)$ for $hh'$ with $z'$. Similarly all $N^-$-conditions can be
transferred from $(h,k)(h',k)$ with $(z',z'')$ to $hh'$ with $z'$.
Hence whenever Condition $(D1)$ if fulfilled for $(h,k)(h',k)$ with
$z=(z',z'')$ then $(D1)$ holds for $hh'$ with $z'$, as well. 

Now, assume that Condition $(D2)$ holds for $(h,k)(h',k)$ with
$z_1=(z_1',z_1'')$ and $z_2=(z_2',z_2'')$. By the above arguments it
is clear that $(3^+)$ is fulfilled for $hh'$ with $z_1'$ and $(3^-)$
is fulfilled for $hh'$ with $z_2'$. Consider the weak $N^-$-condition
for $(h,k)(h',k)$ with $z_1$: 
$$N_G^-[(h,k)]\cap N_G^-[(h',k)] \subseteq N_G^-[(h,k)]\cap N_G^-[(z_1',z_1'')]$$ and 
$$N_G^-[(h,k)]\cap N_G^-[(h',k)] \subseteq N_G^-[(h',k)]\cap N_G^-[(z_1',z_1'')].$$
Obviously this implies $N_H^-[h]\cap N_H^-[h'] \subseteq N_H^-[h]\cap
N_H^-[z_1']$ and $N_H^-[h]\cap N_H^-[h'] \subseteq N_H^-[h']\cap
N_H^-[z_1']$. Therefore, the weak $N^-$-condition holds for $hh'$ with
$z_1'$. By analogous arguments, we obtain that also the weak
$N^+$-condition is fulfilled for $hh'$ with $z_2'$. Hence, Condition
$(D2)$ holds for $hh'$ with $z_1'$ and $z_2'$.

If Condition $(D3)$ is fulfilled for  $(h,k)(h',k)$ with
$z=(z',z'')$ then by the above arguments, the $N^-$-condition holds
for $hh'$ with $z'$. Moreover, it holds $N_G^-[(h,k)]=N_G^-[(z',z'')]$
or $N_G^-[(h',k)]=N_G^-[(z',z'')]$, but this is only possible if
$N_H^-[h]=N_H^-[z']$ or $N_H^-[h']=N_H^-[z']$. Hence, $(D3)$ holds for
$hh'$ with $z'$. By analogous arguments we can infer that Condition
$(D4)$ holds for $hh'$ with $z'$ whenever $(D4)$ holds for
$(h,k)(h',k)$ with $z=(z',z'')$.

It remains to check Condition $(D5)$. Let $(z_1',z_1'')$ and
$(z_2',z_2'')$ be two distinct vertices such that
$N_G^+[(h,k)]=N_G^+[(z_1',z_1'')]$,
$N_G^-[(h,k)]=N_G^-[(z_2',z_2'')]$,
$N_G^-[(z_1',z_1'')]=N_G^-[(h',k)]$ and
$N_G^+[(z_2',z_2'')]=N_G^+[(h',k)]$. Again, this is only possible if
$N_H^+[h]=N_H^+[z_1']$, $N_H^-[h]=N_H^-[z_2']$,
$N_H^-[z_1']=N_H^-[h']$ and $N_H^+[z_2']=N_H^+[h']$. Clearly, since
$(h,k)(h',k)\in E(G)$ the vertices $h$ and $h'$ are distinct. However,
we must also verify that $z_1'\neq z_2'$ and
$z_1',z_2'\notin\{h,h'\}$.

Assume $z_1'=h$. Since by assumption,
$N_G^-[(z_1',z_1'')]=N_G^-[(h',k)]$ it must hold
$N_K^-[z_1'']=N_K^-[k]$. Then we can infer that $N_H^-[h]\times
N_K^-[k]= N_H^-[h]\times N_K^-[z_1''] = N_H^-[z_1']\times
N_K^-[z_1'']$ and thus, $N_G^-[(h,k)]=N_G^-[(z_1',z_1'')]$. However,
this contradicts the fact that $G$ is thin, since we assumed that
$N_G^+[(h,k)]=N_G^+[(z_1',z_1'')]$. Using analogous arguments one 
shows that $z_1',z_2'\notin\{h,h'\}$.

Finally, assume that $z_1'=z_2'$. First, note that
$N_G^-[(z_1',z_1'')]=N_G^-[(h',k)]$ implies that
$N_K^-[z_1'']=N_K^-[k]$. Second, $N_G^-[(h,k)]=N_G^-[(z_2',z_2'')]$
implies that $N_H^-[h]=N_H^-[z_2']$ and thus, $N_H^-[h]=N_H^-[z_1']$.
Therefore, by the same arguments as before, we obtain that
$N_G^-[(h,k)]=N_G^-[(z_1',z_1'')]$, which contradicts that $G$ is
thin, since by assumption $N_G^+[(h,k)]=N_G^+[(z_1',z_1'')]$. Hence,
Condition $(D5)$ is fulfilled for $hh'$ with $z_1'$ and $z_2'$. 

To summarize, dispensability of $(h,k)(h',k)$ in $H \boxtimes K$
implies dispensability of $hh'$ in $H$. By commutativity of the
products, we can conclude that $\skel(H)\, \square \, \skel(K)\subseteq \skel(H
\boxtimes K)$, that together with Lemma \ref{lem:skel1} implies $\skel(H
\boxtimes K)=\skel(H)\, \square \,\skel(K)$
\end{proof}

In the following, we will show that the Cartesian skeleton $\skel(G)$ of a connected thin digraph
$G$ is connected.  


\begin{lem}
Let $G=(V,E)$ be a thin connected digraph  and let $S^+(v)$ and $S^-(v)$ be
the corresponding $S^+$- and $S^-$-classes containing vertex $v\in V$.
Then all vertices contained in $S^+(v)$ lie  in the
same connected component of $\skel(G)$, i.e., there is always 
a walk
consisting of non-dispensable edges only, that  connects 
all vertices $x,y\in S^+(v)$ . The same is true for all vertices 
contained in  $S^-(v)$.
\label{lem:cC}
\end{lem}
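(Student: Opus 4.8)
The plan is to reduce the statement to a purely combinatorial connectivity property of the family of in-neighborhoods of the class $S^+(v)$. First I would restrict attention to edges inside $S^+(v)$: by Lemma \ref{lem:eqcl-complete} the class $S^+(v)$ induces a complete subgraph, so for any two distinct $x,y\in S^+(v)$ both $xy$ and $yx$ lie in $E(G)$, and it suffices to decide which of these survive in $\skel(G)$. The decisive observation is that for $x,y\in S^+(v)$ we have $N^+[x]=N^+[y]$, so no $N^+$-condition can ever hold for $xy$: each of $(1^+),(2^+),(3^+)$ would force a strict inclusion inside the equal sets $N^+[x]=N^+[y]$, which is impossible. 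This rules out $(D1)$, $(D2)$ (which needs $(3^+)$) and $(D3)$. Moreover $(D5)$ is impossible as well: its witness $z_2$ would satisfy $N^+[z_2]=N^+[y]=N^+[x]$ and $N^-[z_2]=N^-[x]$, so thinness of $G$ would force $z_2=x$, contradicting $z_2\neq x$. Hence the only way an edge inside $S^+(v)$ can be dispensable is $(D4)$, and there the requirement $N^+[x]=N^+[z]$ or $N^+[y]=N^+[z]$ forces the witness $z$ into $S^+(v)$; thus $xy$ is dispensable precisely when it satisfies the $N^-$-condition with some $z\in S^+(v)$.

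This turns the claim into the following combinatorial statement. Consider the family $\{N^-[x]\mid x\in S^+(v)\}$; these sets are pairwise distinct (by thinness, together with the common value of $N^+$), and each contains $S^+(v)$, since completeness of $S^+(v)$ gives $w\in N^-[x]$ for all $w,x\in S^+(v)$. Form the auxiliary graph $\Gamma$ on $S^+(v)$ in which $xy$ is an edge whenever no $z\in S^+(v)$ realises one of $(1^-),(2^-),(3^-)$ for the pair $N^-[x],N^-[y]$. By the previous paragraph $\Gamma$ is exactly the subgraph of $\skel(G)$ induced on $S^+(v)$, so it remains to prove that $\Gamma$ is connected.

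I would prove connectivity of $\Gamma$ in two steps. Step one (walking up): if $N^-[x]$ is not inclusion-maximal in the family, pick $y\in S^+(v)$ with $N^-[x]\subsetneq N^-[y]$ minimal with this property; then $xy$ is non-dispensable, because $(1^-)$ is blocked by minimality, $(2^-)$ by $N^-[x]\subsetneq N^-[y]$, and $(3^-)$ by the identity $N^-[x]\cap N^-[y]=N^-[x]$, which makes the required strict inclusion $N^-[x]\subsetneq N^-[x]\cap N^-[z]$ impossible. Iterating, every vertex is joined in $\Gamma$ by non-dispensable edges to a vertex whose in-neighborhood is inclusion-maximal. Step two (joining the maximal vertices): suppose the vertices with maximal $N^-$ do not all lie in one component of $\Gamma$, and among all cross-component pairs $x,y$ of such vertices choose one with $|N^-[x]\cap N^-[y]|$ maximal. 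Since $xy$ is not an edge it must satisfy $(3^-)$ with some witness $z$, which (enlarging $z$ along a walk-up to a maximal vertex above it, still a witness) may be taken to have maximal $N^-$ as well and to be distinct from $x,y$; then $|N^-[x]\cap N^-[z]|>|N^-[x]\cap N^-[y]|$ and $|N^-[y]\cap N^-[z]|>|N^-[x]\cap N^-[y]|$. Whichever component $z$ belongs to, one of the pairs $(x,z)$, $(z,y)$ is again a cross-component pair of maximal vertices but with strictly larger intersection, contradicting the choice. Hence all maximal vertices lie in one component, and together with Step one this shows $\Gamma$ is connected.

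Finally, the statement for $S^-(v)$ is obtained by interchanging the roles of $N^+$ and $N^-$ throughout: there $N^-[x]=N^-[y]$ kills every $N^-$-condition, so only $(D3)$ survives and reduces to an $N^+$-condition with a witness in $S^-(v)$, and the same two-step argument applies to the family $\{N^+[x]\mid x\in S^-(v)\}$. I expect the main obstacle to be Step two, that is, handling condition $(3^-)$: a naive induction on the symmetric difference $|N^-[x]\triangle N^-[y]|$ fails, because a $(3^-)$-witness $z$ need not decrease it (a witness with a large $N^-[z]$ can leave the symmetric differences of both new pairs as large as the original). The extremal argument on the intersection sizes among the inclusion-maximal in-neighborhoods is what circumvents this difficulty.
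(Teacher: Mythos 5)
Your proposal is correct and takes essentially the same route as the paper's proof: reduce to condition $(D4)$ (ruling out $(D1)$--$(D3)$ via $N^+[x]=N^+[y]$ and $(D5)$ via thinness, which forces the witness $z$ into $S^+(v)$), connect inclusion-comparable pairs by chains of in-neighborhoods, and settle the remaining $(3^-)$ case by an extremal argument on $|N^-[x]\cap N^-[y]|$ among cross-component pairs. The only cosmetic difference is that you run the extremal argument over inclusion-maximal vertices (after lifting the witness to a maximal one), whereas the paper maximizes $k_{xy}$ over all cross-component pairs in $S^+(v)$ directly.
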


\begin{proof}
If $|S^+(v)|=1$ there is nothing to show. Thus, assume $x,y\in
S^+(v)$. By Lemma \ref{lem:eqcl-complete} there is an edge $xy\in E(G)$.
Assume that this edge $xy$ is dispensable. Since $N^+[x]=N^+[y]$, 
none of the Conditions $(D1)$, $(D2)$, and $(D3)$ can be satisfied
for the edge $xy$. Moreover, $(D5)$ can not hold, since otherwise we would have
$N^+[x]=N^+[z_1]=N^+[y]=N^+[z_2]$ and $N^-[x]=N^-[z_2]$ and thus, $G$ would not
be thin. 
Therefore, if $xy$ is dispensable, then Condition $(D4)$ must hold. 
Thus, there is a vertex $z$ such that one of the $N^-$-conditions
$(1^-)$, $(2^-)$ or  $(3^-)$  with $z$ holds for $xy$ 
and $N^+[x]=N^+[z]$ or $N^+[z]=N^+[y]$. Since $N^+[x]=N^+[y]$, we can conclude that
$z$ must be contained in $S^+(v)$.


First assume that Condition $(1^-)$ for $xy$ with $z$ is satisfied and
therefore in particular, $N^-[x]\subset N^-[y]$. Consider the maximal
chain $N^-[x]\subset N^-[z_1]\subset \ldots \subset N^-[z_{k-1}]
\subset N^-[y]$ of neighborhoods between $N^-[x]$ and $N^-[y]$ ordered
by proper inclusions, where $z_i\in S^+(v)$ for all $i\in
\{1,\dots,k-1\}$. To simplify the notation let $x=z_0$ and $y=z_k$.
Lemma \ref{lem:eqcl-complete} implies that $z_iz_{i+1} \in E(G)$ for
all $i\in\{0,\dots,k-1\}$. We show that the edges $z_iz_{i+1}$ for all
$i\in \{0,\dots,k-1\}$ are non-dispensable. By the preceding
arguments, such an edge $z_iz_{i+1}$ can only be dispensable if
Condition $(D4)$ is satisfied, and thus in particular, if there exists
a vertex $z'\in S^+(v)$ such that the $N^-$-condition for $z_iz_{i+1}$
holds with $z'$. Since $N^-[z_{i}]\subset N^-[z_{i+1}]$ we can
conclude that Condition $(2^-)$ cannot be satisfied. Moreover,
$N^-[z_{i}] \cap N^-[z_{i+1}] \subset N^-[z_{i}] \cap N^-[z]$ is not
possible, and thus, Condition $(3^-)$ cannot be satisfied.
Furthermore, since we constructed a maximal chain of proper included
neighborhoods, $N^-[z_{i}] \subset N^-[z']\subset N^-[z_{i+1}]$ is not
possible and and therefore, Condition $(1^-)$ cannot be satisfied.
Hence, none of the $N^-$-conditions for the edges $xz_1,z_1z_2,\ldots
, z_ky$ can be satisfied, which yields a walk in $\skel(G)$ connecting
$x$ and $y$. Therefore, all $x,y\in S^+(v)$ with $N^-[x]\subset
N^-[y]$ lie in the same connected component of $\skel(G)$. By
analogous arguments one shows that $x$ and $y$ lie in the same
connected component of $\skel(G)$ if Condition $(2^-)$ and thus,
$N^-[y]\subset N^-[x]$ is satisfied. 

We summarize at this point: All $x,y\in S^+(v)$ where the edge $xy$ 
fulfill Condition $(1^-)$ or $(2^-)$ are in the same connected component of 
$\skel(G)$. Now assume for contradiction,  that there are 
vertices  $x,y \in S^+(v)$ (and hence, an edge $xy \in E(G)$)
that are in different connected components of $\skel(G)$. 
This is only possible if the edge $xy$ is dispensable by Condition 
$(3^-)$ and thus if $N^-[x] \cap N^-[y] \subset N^-[x] \cap N^-[z]$ and 
$N^-[x] \cap N^-[y] \subset N^-[y] \cap N^-[z]$. 
Define for arbitrary vertices $x,y \in S^+(v)$ the 
integer $k_{xy}=|N^-[x]\cap N^-[y]|$ and take among all  
$x,y \in S^+(v)$ that are in different connected components 
of $\skel(G)$ the 
ones that have largest value $k_{xy}$. Note,
$k_{xz},k_{yz}>k_{xy}$. Moreover, since $z\in S^+(v)$ and
we have taken $x,y\in S^+(v)$ that have largest integer $k_{xy}$ 
among all vertices that are in different connected components 
of $\skel(G)$, we can conclude that $x$ and $z$, as well as
$y$ and $z$ are in the same connected component in $\skel(G)$,  
a contradiction.
This completes the proof for the case $x,y\in S^+(v)$.

By analogous arguments one shows that the statement is true for 
$S^-(v)$.
\end{proof}

\begin{lem}
Let $G=(V,E)$ be a thin connected digraph and $x,y\in V$ with
$N^+[x]\subset N^+[y]$ or $N^-[x]\subset N^-[y]$. 
Then there is a walk in $\skel(G)$ connecting $x$ and $y$.
\label{lem:subConnected}
\end{lem}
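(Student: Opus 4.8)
The plan is to reduce the statement to Lemma~\ref{lem:cC} by building a walk in $\skel(G)$ out of pieces that each lie inside a single $S^+$- or $S^-$-class, which Lemma~\ref{lem:cC} already connects. Without loss of generality assume $N^-[x]\subset N^-[y]$; the case $N^+[x]\subset N^+[y]$ is symmetric under exchanging the roles of in- and out-neighborhoods. By Lemma~\ref{lem:eqcl-complete} the strict inclusion $N^-[x]\subset N^-[y]$ gives us an edge $xy\in E(G)$, since $x\in N^-[x]\subseteq N^-[y]$ forces $xy\in E(G)$. So it suffices to show that this single edge $xy$ either is itself non-dispensable or can be bypassed by a walk of non-dispensable edges.

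First I would examine how $xy$ could fail to be non-dispensable. Since $N^-[x]\subset N^-[y]$ is a proper inclusion, the edge $xy$ satisfies $(1^-)$ of the $N^-$-condition with any vertex $z$ whose in-neighborhood sits strictly between $N^-[x]$ and $N^-[y]$; more to the point, the crucial observation is that any dispensability condition invoked for $xy$ must involve a witness $z$ (or $z_1,z_2$) with $N^-[x]\subseteq N^-[z]\subseteq N^-[y]$ or an equality $N^-[x]=N^-[z]$ or $N^-[y]=N^-[z]$. I would run through the Conditions $(D1)$--$(D5)$ and in each case identify a witness vertex $z$ with $N^-[z]$ comparable to (or equal to) $N^-[x]$ or $N^-[y]$, so that $z$ lies in the same $S^-$-class as $x$ or as $y$, \emph{or} $z$ provides an intermediate step $x,z$ and $z,y$ each with a proper in-neighborhood inclusion and smaller ``gap'' than $xy$.

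This motivates an induction on $k_{xy}=|N^-[y]\setminus N^-[x]|$, the size of the gap between the two in-neighborhoods, exactly mirroring the maximal-chain argument already used inside Lemma~\ref{lem:cC}. The base case $k_{xy}=1$ (or, via Lemma~\ref{lem:cC}, the case where $x$ and $y$ share an $S^+$- or $S^-$-class) should force $xy$ to be non-dispensable, since no strictly intermediate in-neighborhood exists and thinness rules out the equality-based conditions. For the inductive step, if $xy$ is dispensable then the extracted witness $z$ satisfies $N^-[x]\subset N^-[z]\subset N^-[y]$ with both gaps $|N^-[z]\setminus N^-[x]|$ and $|N^-[y]\setminus N^-[z]|$ strictly smaller than $k_{xy}$, so by the induction hypothesis there are walks in $\skel(G)$ connecting $x$ to $z$ and $z$ to $y$, which concatenate to the desired $xy$-walk. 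The remaining cases are those in which the witness forces an equality $N^-[x]=N^-[z]$ or $N^-[y]=N^-[z]$ (so $z\in S^-(x)$ or $z\in S^-(y)$) or an equality of out-neighborhoods; here I would hand off to Lemma~\ref{lem:cC} to route within the relevant $S^\pm$-class and then take the comparable edge to the other endpoint.

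The main obstacle will be the careful bookkeeping in the inductive step: one must verify that \emph{every} dispensability condition that can apply to $xy$, given the standing hypothesis $N^-[x]\subset N^-[y]$, genuinely yields either a strictly intermediate in-neighborhood (shrinking the gap for induction) or an $S^\pm$-class equality (invoking Lemma~\ref{lem:cC}), and that thinness of $G$ is used exactly where needed to exclude the degenerate configurations in $(D5)$ and the equality-based branches of $(D3)$/$(D4)$. Checking that the witnesses $z$ really land in the expected neighborhood chain for each of $(D1)$--$(D5)$, rather than somewhere incomparable, is the delicate point, and it is precisely the generalization of the undirected maximal-chain argument to the directed setting where in- and out-conditions interact.
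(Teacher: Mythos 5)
Your witness classification is essentially correct and, for most cases, tracks the paper's own argument: under $N^-[x]\subset N^-[y]$ the $N^-$-condition for $xy$ can only hold via $(1^-)$ (since $N^-[x]\cap N^-[y]=N^-[x]$ kills $(3^-)$ and the inclusion kills $(2^-)$), so $(D2)$ is impossible, $(D1)$ and $(D4)$ force a strictly intermediate witness that your induction on the gap (the paper's maximal-chain reduction) handles, and $(D5)$ gives a vertex $z_1$ with $N^+[z_1]=N^+[x]$ and $N^-[z_1]=N^-[y]$, so Lemma~\ref{lem:cC} connects $x$ to $z_1$ and $z_1$ to $y$. Up to this point your plan and the paper's proof coincide.

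The genuine gap is your treatment of $(D3)$, the one condition whose witness shares a class with only \emph{one} endpoint. Two concrete failures. First, your base case is wrong: thinness does \emph{not} rule out the equality-based conditions, since a thin digraph may contain $z\neq x$ with $N^-[z]=N^-[x]$ provided $N^+[z]\neq N^+[x]$; such a $z$ can satisfy the $N^+$-condition with $xy$, so $(D3)$ can apply even when $|N^-[y]\setminus N^-[x]|=1$. Second, your inductive handling of $(D3)$ is circular: after routing from $x$ to $z\in S^-(x)$ by Lemma~\ref{lem:cC}, the ``comparable edge'' $zy$ satisfies $N^-[z]=N^-[x]\subset N^-[y]$, so it has exactly the same gap as $xy$ and may itself be dispensable by $(D3)$ with yet another witness; no quantity in your induction decreases. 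What is missing --- and what the paper supplies (in its mirrored orientation, where the problematic condition is $(D4)$) --- is a \emph{global} extremal argument over all edges between the two classes: every pair $(x',y')\in S^-(x)\times S^-(y)$ is adjacent, any such edge that cannot be bypassed by the earlier arguments must be dispensable by $(D3)$, and every $(D3)$-witness for such an edge again lies in $S^-(x)\cup S^-(y)$. Choosing the pair that minimizes $|N^+[y']\setminus N^+[x']|$ when some pair has nested out-neighborhoods (respectively, maximizes $|N^+[x']\cap N^+[y']|$ when no pair does) makes the witness contradict extremality, so some edge between the two classes is non-dispensable, and Lemma~\ref{lem:cC} then yields the $xy$-walk. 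Without this argument (or an equivalent replacement), your proof does not go through.
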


\begin{proof}
Assume first that $N^+[x]\subset N^+[y]$. 
Note, one can always construct a maximal chain of vertices with $N^+[x]\subset
N^+[z_1]\subset \ldots \subset N^+[y]$ and connect walks inductively, 
whenever the statement is true.
Therefore, we can assume that $N^+[x]\subset N^+[y]$ with no $z\in
V$ such that $N^+[x]\subset N^+[z] \subset N^+[y]$. Clearly, 
$N^+[x]\subset N^+[y]$ implies $xy\in
E(G)$. Assume $xy$ is dispensable. Since by assumption there is no $z$
with $N^+[x]\subset N^+[z] \subset N^+[y]$ it follows that Condition
$(1^+)$ can not hold. Moreover, since $N^+[x]\subset N^+[y]$ Condition
$(2^+)$ can not hold. The latter also implies $N^+[x] \cap
N^+[y]=N^+[x]$ and thus Condition $(3^+)$ can not be fulfilled, since
$N^+[x]=N^+[x] \cap N^+[y] \subset N^+[x] \cap N^+[z]$ is not
possible. Thus $xy$ does not satisfy the $N^+$-condition and thus it 
cannot be dispensable by Conditions $(D1)$, $(D2)$ or $(D3)$. 

If it is dispensable by Condition $(D5)$, then there exists a vertex
$z_1$ with $N^+[x]=N^+[z_1]$ and $N^-[z_1]=N^-[y]$. Hence, $z_1\in
S^+(x)$ and $z_1\in S^-(y)$. By Lemma \ref{lem:cC} there is a
$x,z_1$-walk and $z_1,y$-walk and thus, a walk connecting $x$ and $y$
consisting of non-dispensable edges only. This together with Lemma
\ref{lem:cC} implies that, also all vertices $x'\in S^+(x)$ and $y'\in
S^+(y)$ are connected by a walk of non-dispensable edges. 

Assume now for contradiction that vertices $x$ and $y$ are in different 
connected components of $\skel(G)$. By Lemma \ref{lem:cC}, all vertices contained 
 $S^+(x)$ are in same connected component of $\skel(G)$. The same is
true for all vertices contained in $S^+(y)$. 
Hence if $x$ and $y$ are in different components then all vertices contained
 $S^+(x)$ must be in a different connected component of $\skel(G)$ than the 
vertices contained in $S^+(y)$. By the preceding arguments, 
this can only happen, when
all edges $x'y'\in E$ with $x'\in S^+(x)$ and $y'\in S^+(y)$ are dispensable 
by Condition $(D4)$. 
We examine now three cases: there are $x'\in S^+(x)$ and $y'\in S^+(y)$
with $(i)$ $N^-[x']\subset N^-[y']$, $(ii)$
$N^-[y']\subset N^-[x']$ or $(iii)$ none of the cases $(i),(ii)$ hold. 

\emph{Case (i) $N^-[x']\subset N^-[y']$:} W.l.o.g. assume that $x'$
and $y'$ are chosen, such that $|N^-[y']-N^-[x']|$ becomes minimal
among all such pairs $x' \in S^+(x)$ and $y' \in S^+(y)$. Since the
edge $x'y'$ must be dispensable by Condition $(D4)$, there is a vertex
$z\in V$ with $N^+[x']=N^+[z]$ or $N^+[y']=N^+[z]$ and $xy$ satisfies
the $N^-$-condition with $z$. Clearly, Condition $(2^-)$ with
$N^-[y']\subset N^-[z]\subset N^-[x']$ cannot be fulfilled. Moreover,
Condition $(3^-)$ cannot hold, since $N^-[x']\subset N^-[y']$ implies
that $N^-[x']=N^-[x'] \cap N^-[y']$ and thus $N^-[x'] \cap N^-[y']\subset
N^-[x'] \cap N^-[z]$ is not possible. Thus, assume $x'y'$ fulfills
Condition $(1^-)$ with $z$, then $N^-[x']\subset N^-[z]\subset
N^-[y']$. Since $N^+[x']=N^+[z]$ or $N^+[y']=N^+[z]$ we have that $z
\in S^+(x)$ or $z \in S^+(y)$. But then $|N^-[z]-N^-[x']|$ or
$|N^-[y']-N^-[z]|$ is smaller than $|N^-[y']-N^-[x']|$, a
contradiction. Hence, $x'y'$ is \emph{not} dispensable and thus
the vertices in $S^+(x)$ and $S^+(y)$ cannot be 
in different connected components of $\skel(G)$.

\emph{Case (ii) $N^-[y']\subset N^-[x']$:} By analogous arguments
as in Case $(i)$ one shows that the edge $x'y'$ 
connects $S^+(x)$ and $S^+(y)$ when $x'$ and $y'$ are chosen
such that $|N^-[y']-N^-[x']|$ becomes minimal. 

\emph{Case (iii):} 
Assume that neither Case $(i)$ nor $(ii)$ holds. Hence, if $x'y'$ with
$x'\in S^+(x)$ and $y'\in S^+(y)$ is dispensable by Condition $(D4)$,
then neither $(1^-)$ nor $(2^-)$ of the $N^-$-condition can be
fulfilled. Hence, $(3^-)$ with some vertex $z$ must hold, 
that is, $N^-[x']\cap N^-[y']
\subset N^-[z]\cap N^-[y']$ and $N^-[x']\cap N^-[y'] \subset
N^-[z]\cap N^-[x']$. 
W.l.o.g. assume that $x'$ and $y'$ are chosen, such
that $|N^-[x']\cap N^-[y']|$ becomes maximal among all such pairs $x'
\in S^+(x)$ and $y' \in S^+(y)$. By Condition $(D4)$ 
it holds that $N^+[x']=N^+[z]$ or $N^+[y]=N^+[z]$ and thus, $z \in S^+(x)$
or $z \in S^+(y)$. However, Condition $(3^-)$ is fulfilled, and thus
$|N^-[x']\cap N^-[z]|$ and $|N^-[z]\cap N^-[y']|$ are greater than
$|N^-[x']\cap N^-[y']|$, a contradiction to the choice of $x'$ and
$y'$. Hence, $x'y'$ is \emph{not} dispensable and thus, the vertices 
contained $S^+(x)$ and $S^+(y)$ cannot lie in different 
connected components of $\skel(G)$. 

By analogous arguments one shows, that  $x,y\in V$ are in
the same connected component of $\skel(G)$ if
 $N^-[x]\subset N^-[y]$. 
\end{proof}

\begin{prop}
If $G=(V,E)$ is thin and connected, then $\skel(G)$ is connected.
\label{lem:skelConnected}
\end{prop}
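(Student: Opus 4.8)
The plan is to reduce the statement to a claim about single edges and then run an extremal argument in the spirit of the proof of Lemma~\ref{lem:cC}. Since $\skel(G)$ is a spanning subgraph of $G$ and $G$ is connected, it suffices to show that the two endpoints of every edge $xy\in E(G)$ lie in the same connected component of $\skel(G)$: concatenating along a walk in $G$ then connects any two vertices. If $xy$ is non-dispensable there is nothing to prove, so the whole difficulty is to connect the endpoints of a \emph{dispensable} edge by a walk of non-dispensable edges.

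To organise the dispensable case I would argue by contradiction. Assume $\skel(G)$ is disconnected and, among all pairs $\{x,y\}$ that are adjacent in $G$ but lie in different components of $\skel(G)$, choose one maximising
\[
  m(x,y)=|N^+[x]\cap N^+[y]|+|N^-[x]\cap N^-[y]|.
\]
Such an edge is necessarily dispensable, and I would split according to which condition $(D1)$--$(D5)$ witnesses this. Two patterns are handled directly. First, if $xy$ is dispensable by $(D5)$ with $z_1,z_2$, then $N^+[x]=N^+[z_1]$ and $N^-[z_1]=N^-[y]$ give $z_1\in S^+(x)\cap S^-(y)$, so Lemma~\ref{lem:cC} produces an $x,z_1$-walk and a $z_1,y$-walk in $\skel(G)$, contradicting the separation of $x$ and $y$. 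Second, whenever the witnessing condition involves a sub-condition of type $(1^+),(2^+),(1^-)$ or $(2^-)$, i.e.\ properly nested $N^+$- or $N^-$-neighborhoods among $x,z,y$, Lemma~\ref{lem:subConnected} yields a walk $x\leadsto z\leadsto y$, again a contradiction. This disposes of all of $(D1)$ except the case where both the $N^+$- and the $N^-$-condition are of type $(3)$, and of the type-$(1)/(2)$ variants of $(D3)$ and $(D4)$.

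In the remaining cases every relevant sub-condition is of type $(3)$, and here the maximality of $m$ does the work. The strict inclusions in $(3^+)$ give $|N^+[x]\cap N^+[z]|>|N^+[x]\cap N^+[y]|$ and $|N^+[y]\cap N^+[z]|>|N^+[x]\cap N^+[y]|$, and symmetrically for $(3^-)$; the complementary bound is supplied by the partner sub-condition of the same dispensability rule, namely by $(3^-)$ in $(D1)$, by the weak $N^-$-condition in $(D2)$, and by the equality $N^-[x]=N^-[z]$ or $N^-[y]=N^-[z]$ in $(D3)$ (and symmetrically for $(D4)$). In every such case one obtains $m(x,z)>m(x,y)$ and $m(y,z)>m(x,y)$, while Remark~\ref{rem:edge} guarantees that $z$ is adjacent in $G$ to both $x$ and $y$. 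By the maximal choice of $m(x,y)$, neither $\{x,z\}$ nor $\{y,z\}$ can be an adjacent pair lying in different components, so $x,z$ and $y,z$ lie in common components of $\skel(G)$; hence $x,z,y$ all lie in one component, contradicting that $x$ and $y$ are separated. Therefore $\skel(G)$ is connected.

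The main obstacle I anticipate is the bookkeeping in this last step: one must check that across $(D2)$, $(D3)$ and $(D4)$ the combination of a \emph{strict} type-$(3)$ inequality in one family of neighborhoods with only a \emph{weak} inclusion or an equality in the complementary family still forces $m$ to increase \emph{strictly} for both endpoints at once. It is precisely this simultaneous strict increase for $x$ and for $y$ that makes the single extremal quantity $m$ applicable to all type-$(3)$ cases uniformly; verifying it case by case, and confirming that no dispensability pattern escapes the trichotomy $(D5)$ / nested / all-type-$(3)$, is the technical heart of the argument.
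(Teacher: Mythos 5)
Your proposal is correct and follows essentially the same route as the paper's proof: the same extremal quantity $k_{xy}=|N^+[x]\cap N^+[y]|+|N^-[x]\cap N^-[y]|$, with $(D5)$ handled via Lemma~\ref{lem:cC}, the nested sub-conditions $(1^\pm),(2^\pm)$ via Lemma~\ref{lem:subConnected}, and the remaining type-$(3)$ cases via maximality together with Remark~\ref{rem:edge}. The only cosmetic difference is in $(D3)/(D4)$, where the paper connects $x$ to $z$ by Lemma~\ref{lem:cC} (since $N^-[x]=N^-[z]$ gives $z\in S^-(x)$) while you run the extremal bound on both pairs; both verifications go through.
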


\begin{proof}
For each edge $xy\in E(G)$ define an integer
$$k_{xy}=|N^+[x]\cap N^+[y]|+|N^-[x]\cap N^-[y]|.$$

Assume for contradiction, that $x$ and $y$ are in different connected
components of $\skel(G)$. Hence, $xy$ must be dispensable. Take among all 
dispensable edges $xy\in E$, where $x$ and $y$ are in different components 
of $\skel(G)$ the ones that have largest value $k_{xy}$.

By the same arguments as in the proof of Lemma \ref{lem:subConnected}
the edge $xy$ cannot be dispensable by Condition $(D5)$, since then
there is a vertex $z_1\in S^+(x)$ and $z_1\in S^-(y)$ and 
by Lemma \ref{lem:cC}, there is 
a walk connecting $x$ and $y$ consisting of non-dispensable edges only
and thus $x$ and $y$ are in the same connected component of $\skel(G)$.

Moreover, if for $x$ and $y$ one of the Conditions $(1^+)$, $(2^+)$, 
$(1^-)$ or $(2^-)$ holds, then Lemma \ref{lem:subConnected} implies that
$x$ and $y$ are in the same connected component of $\skel(G)$. 

If $(D1)$ with  $(3^+)$ and $(3^-)$ is satisfied, then 
 $N^+[x]\cap N^+[y]\subset N^+[x]\cap N^+[z]$, 
 $N^-[x]\cap N^-[y]\subset N^-[x]\cap N^-[z]$, 
 $N^+[x]\cap N^+[y]\subset N^+[y]\cap N^+[z]$ and 
 $N^-[x]\cap N^-[y]\subset N^-[y]\cap N^-[z]$.
Note, by Remark \ref{rem:edge} there is an edge
$xz \in E$ or $zx \in E$, as well as, an edge 
$yz \in E$ or $zy \in E$.
 But then, 
 $k_{xz}>k_{xy}$ and $k_{yz}>k_{xy}$. Since $xy$ is chosen
among all dispensable edges where $x$ and $y$ are in different 
components that have maximal value $k_{xy}$ we can
conclude that  $x$ and $z$, resp., $y$ and $z$ 
are in the same connected component of $\skel(G)$ 
or that $xz$, resp., $yz$ are non-dispensable. 
Both cases lead to a contradiction, since then 
$x$ and $y$ would be connected by a walk in $\skel(G)$. 

If $(D2)$ holds, then in particular Condition $(3^+)$ 
and the weak $N^-$-condition holds with $z_1$.
Therefore,  
$N^+[x]\cap N^+[y] \subset N^+[x]\cap N^+[z_1]$, 
$N^+[x]\cap N^+[y] \subset N^+[y]\cap N^+[z_1]$,  
$N^-[x]\cap N^-[y] \subseteq N^-[x]\cap N^-[z_1]$ and 
$N^-[x]\cap N^-[y] \subseteq N^-[y]\cap N^-[z_1]$. 
By Remark \ref{rem:edge} there is an edge 
$xz_1 \in E$ or $z_1x \in E$, as well as, an edge 
$yz_1 \in E$ or $z_1y \in E$.
Again, $k_{xz_1}>k_{xy}$ and $k_{z_1y}> k_{xy}$. By analogous 
arguments as in the latter case, we obtain a contradiction.  

If $(D3)$ holds, then there is a vertex  $z\in V$ with 
$N^+[x]\cap N^+[y]\subset N^+[x]\cap N^+[z]$, 
$N^+[x]\cap N^+[y]\subset N^+[y]\cap N^+[z]$ and 
$N^-[x]=N^-[z]$ or $N^-[y]=N^-[z]$. 
If $N^-[x]=N^-[z]$, then Lemma \ref{lem:cC} implies that 
$x$ and $z$ are connected by a walk. 
Moreover,  for $zy$ holds then 
$N^-[x]\cap N^-[y]= N^-[y]\cap N^-[z]$ and still 
$N^+[x]\cap N^+[y]\subset N^+[y]\cap N^+[z]$.
Note, by Remark \ref{rem:edge} there is an edge 
$yz \in E$ or $zy \in E$.
Again, $k_{yz}>k_{xy}$ and by analogous arguments as before, 
$yz$ is connected by a walk in $\skel(G)$.
Combining the $xz$-walk and the $yz$-walk yields
a $xy$-walk in $\skel(G)$, a contradiction. 
Similarly, one treats the case when $N^-[y]=N^-[z]$. 
Analogously, one shows that Condition $(D4)$ leads to 
a contradiction. 

To summarize, for each dispensable edge $xy$ there is a walk
connecting $x$ and $y$ that consists 
of non-dispensable edges only, and thus
$\skel(G)$ is connected. 
\end{proof}

Since $\skel(G)$ is uniquely defined and in particular entirely 
in terms of the adjacency structure of $G$, we have the
following immediate consequence of the definition.
\begin{prop}
Any isomorphism $\varphi: G \rightarrow H$, as a map $V(G) \rightarrow V (H)$, is also an
isomorphism  $\varphi: \skel(G) \rightarrow \skel(H)$
\end{prop}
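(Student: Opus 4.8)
The plan is to exploit the fact, already signalled in the sentence preceding the statement, that $\skel(G)$ is built solely from the inclusion and equality relations among the closed neighborhoods $N_G^+[\cdot]$ and $N_G^-[\cdot]$, every one of which is preserved by an isomorphism.

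First I would record the elementary observation that $\varphi$ carries neighborhoods to neighborhoods. Since $\varphi$ and $\varphi^{-1}$ are both homomorphisms, we have $xy\in E(G)$ if and only if $\varphi(x)\varphi(y)\in E(H)$, and since $\varphi$ is a bijection it follows that $\varphi(N_G^+[x]) = N_H^+[\varphi(x)]$ and $\varphi(N_G^-[x]) = N_H^-[\varphi(x)]$ for every $x\in V(G)$. Because $\varphi$ is injective it commutes with the set operations occurring in the dispensability conditions; in particular $\varphi(N_G^+[x]\cap N_G^+[y]) = N_H^+[\varphi(x)]\cap N_H^+[\varphi(y)]$, and a relation $N_G^+[x]\subseteq N_G^+[y]$ (respectively $\subset$, respectively $=$) holds if and only if the corresponding relation holds between $N_H^+[\varphi(x)]$ and $N_H^+[\varphi(y)]$. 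The same statements hold verbatim for the in-neighborhoods.

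With these preservation properties in hand, I would run through the definitions of the (weak) $N^+$- and $N^-$-conditions and of Conditions $(D1)$--$(D5)$ and check that each is stable under $\varphi$: an edge $xy$ satisfies a given condition with witness $z$ (or witnesses $z_1,z_2$) if and only if the edge $\varphi(x)\varphi(y)$ satisfies the same condition with witness $\varphi(z)$ (or $\varphi(z_1),\varphi(z_2)$). This is purely a matter of transporting each displayed inclusion or equality through $\varphi$ by means of the remarks above; for example the proper inclusions in $(3^+)$ and $(3^-)$ translate directly, and the distinctness requirements $z_1\neq z_2$ and $z_1,z_2\notin\{x,y\}$ in $(D5)$ are preserved because $\varphi$ is a bijection. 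Consequently $xy$ is dispensable in $G$ if and only if $\varphi(x)\varphi(y)$ is dispensable in $H$.

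Finally, since $\varphi$ maps $E(G)$ bijectively onto $E(H)$ and, by the preceding step, maps the set $D(G)$ of dispensable edges of $G$ bijectively onto the set $D(H)$ of dispensable edges of $H$, it also maps $E(\skel(G)) = E(G)\setminus D(G)$ bijectively onto $E(\skel(H)) = E(H)\setminus D(H)$. As $\skel(G)$ and $\skel(H)$ carry the vertex sets of $G$ and $H$ unchanged, the map $\varphi$ together with its inverse is the required isomorphism $\skel(G)\to\skel(H)$. I expect no genuine obstacle beyond the bookkeeping of verifying all five dispensability conditions; the whole content lies in the single observation that every relation defining dispensability is phrased in terms of neighborhoods, which isomorphisms respect.
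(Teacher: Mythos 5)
Your proof is correct and follows exactly the route the paper intends: the paper states this proposition without a written proof, as an ``immediate consequence'' of the fact that $\skel(G)$ is defined entirely in terms of the adjacency structure (equivalently, the neighborhood relations) of $G$, and your argument simply makes that observation explicit by transporting the neighborhood inclusions and equalities in Conditions $(D1)$--$(D5)$ through $\varphi$. No gap; you have merely written out the bookkeeping the authors chose to omit.
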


\section{Algorithms}

By Theorem \ref{thm:uPFD-strong}, every finite simple connected
digraph has a unique representation as a strong product of prime
digraphs, up to isomorphism and the order of the factors. We shortly
summarize the top-level control structure of the algorithm for the
computation of the PFD. We first compute for a given digraph $G$ the
Relation $S$ and its quotient graph $G/S$. By Lemma
\ref{lem:quotientthin} the digraph $G/S$ is thin and thus, the
Cartesian skeleton $\skel(G/S)$ is uniquely determined. The key idea
is then to find the PFD of $G/S$ w.r.t.\   the strong product, which is
achieved by computing the PFD its Cartesian skeleton $\skel(G/S)$ w.r.t.\  
the Cartesian product and to construct the prime factors of
$G/S$ using the information of the PFD of $\skel(G/S)$. Finally, the
prime factors of $G/S$ need to be checked and in some cases be
combined and modified, in order to determine the prime factors of the
digraph $G$ w.r.t.\  strong product, see Figure \ref{fig:quotient} and
\ref{fig:stronggeneral} for examples.


\begin{table}[t]
\begin{algorithm}[H]
\caption{\texttt{Cartesian Skeleton}} 
\label{alg:CartSk}
\begin{algorithmic}[1]
    \STATE \textbf{INPUT:} A connected thin digraph $G=(V,E)$;
		\FOR{each edge $xy\in E$ } 
				\STATE Check the dispensability conditions $(D1)-(D5)$.
		  			 Compute the set $D$ of dispensable edges in $H$;
		\ENDFOR
    \STATE $\skel(G)\gets (V,E\setminus D)$  \label{alg:rem}
    \STATE \textbf{OUTPUT:} The Cartesian skeleton $\skel(G)$;
\end{algorithmic}
\end{algorithm} 

\begin{algorithm}[H]
\caption{PFD of \emph{thin} digraphs w.r.t.\  $\boxtimes$}
\label{alg:factor_thin}
\begin{algorithmic}[1]
    \STATE \textbf{INPUT:} a connected \emph{S-thin} digraph $G$
	\STATE Compute the Cartesian skeleton $\skel(G)$ with Algorithm \ref{alg:CartSk}
	\STATE Compute the Cartesian PFD of $\skel(G) = \Box_{i \in I} H_i$  with the algorithm of Feigenbaum \cite{Fei86}
	\STATE Find all minimal subsets $J$ of $I$ such that the $H_J$-layers of  \label{alg:minset1}
					$H_J\Box H_{I\setminus J}$ where $H_J=\Box_{j\in J}	H_j$			
					and $H_{I\setminus J}=\Box_{j\in I\in J}	H_j$ correspond to layers
					of a factor of $G$ w.r.t.\  the strong product 			
    \STATE \textbf{OUTPUT:} The prime factors of $G$;
\end{algorithmic}
\end{algorithm}

\begin{algorithm}[H]
\caption{PFD of digraphs w.r.t.\  $\boxtimes$}
\label{alg:nonthin_factor}
\begin{algorithmic}[1]
    \STATE \textbf{INPUT:} a connected digraph $G$
	\STATE Compute $G=G'\boxtimes K_l$, where $G'$ has no nontrivial factor \label{alg:thin1}
		      isomorphic to a complete graph $K_r$; 
	\STATE Determine the prime factorization of $K_l$, that is, of $l$; \label{alg:thin2}
	\STATE Compute $H=G'/S$;
	\STATE Compute PFD and prime factors $H_1,\dots, H_n$ of $H$ with Algorithm \ref{alg:factor_thin}
	\STATE  By repeated application of Lemma \ref{lem:gcd}, find all minimal subsets $J$ \label{alg:minset2}
		   of $I = \{1, 2, \dots, n\}$ such that there are graphs $A$ and $B$ 
		   with $G =A\boxtimes B$, $A/S  = \boxtimes_{i\in J} H_i$ and
		   $B  = \boxtimes_{j\in J\setminus I} H_j$. 
			Save $A$ as prime factor.
    \STATE \textbf{OUTPUT:} The prime factors of $G$;
\end{algorithmic} 
\end{algorithm}
\end{table}

We explain in the following the details of this approach more precise. We start
with the construction of the Cartesian skeleton (Algorithm \ref{alg:CartSk}) 
and the computation of the PFD
w.r.t.\  the Cartesian product of digraphs in
Section \ref{subsec:alg:Cart}. We continue to give algorithms for 
determining the prime factors of thin digraphs (Algorithm \ref{alg:factor_thin})
in Section \ref{subsec:alg:thin}
and non-thin digraphs (Algorithm \ref{alg:nonthin_factor}) in
Section \ref{subsec:alg:nonthin}.
Note, these algorithms are only simple generalizations of the Algorithms
24.3, 24.6 and 24.7 for undirected graphs given in \cite{Hammack:2011a}. The novel
improvement for directed graphs is the unique construction of the Cartesian skeleton.
Therefore, we will refer in most parts of the upcoming proofs to  results
established in \cite{Hammack:2011a} rather than to replicate the proofs.

\subsection{Algorithmic Construction of $\skel(G)$ and the PFD of Digraphs w.r.t.\  $\Box$}
\label{subsec:alg:Cart}

\begin{prop}
	For a thin connected digraph $G=(V,E)$ with maximum degree $\Delta$, 
	Algorithm \ref{alg:CartSk} computes
	the Cartesian skeleton $\skel(G)$ in $O(|E|\Delta^3)$ time.  
	\label{prop:alg:CartSk}
\end{prop}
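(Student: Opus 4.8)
The plan is to bound the cost of a single iteration of the for-loop, i.e.\ the cost of testing one fixed edge $xy\in E$ for dispensability, by $O(\Delta^3)$. Since the loop runs $|E|$ times and the final assignment $E\setminus D$ in line~\ref{alg:rem} costs only $O(|E|)$, the total bound $O(|E|\Delta^3)$ then follows immediately by summation. So the whole argument reduces to analyzing how quickly conditions $(D1)$--$(D5)$ can be decided for a single edge.

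The key observation that makes this feasible is Remark~\ref{rem:edge}: whenever $xy$ is dispensable via any of $(D1)$--$(D5)$, every witness vertex (the $z$, resp.\ the pair $z_1,z_2$) must lie in $N^+[x]\cup N^-[x]$ and simultaneously in $N^+[y]\cup N^-[y]$. Hence we never need to scan all of $V$; it suffices to range over the candidate set $C_{xy}=(N^+[x]\cup N^-[x])\cap(N^+[y]\cup N^-[y])$. By definition of the maximum degree $\Delta$ we have $|N^+[v]\cup N^-[v]|\le \Delta+1$, so $|C_{xy}|=O(\Delta)$. First I would precompute, once per edge, the two intersections $N^+[x]\cap N^+[y]$ and $N^-[x]\cap N^-[y]$ in $O(\Delta)$ time each, so that the ``base'' sets appearing on the left of all the inclusions are readily available for every candidate test.

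The conditions split into one-witness and two-witness types. For $(D1)$, $(D3)$ and $(D4)$, which require a single $z$, I would loop over the $O(\Delta)$ candidates $z\in C_{xy}$ and, for each, verify the relevant $N^+$- and/or $N^-$-condition together with the equalities of the form $N^-[x]=N^-[z]$ or $N^+[x]=N^+[z]$ demanded by $(D3)$, $(D4)$. Each such test reduces to a constant number of set-intersection and set-inclusion checks on sets of size $O(\Delta)$, hence costs $O(\Delta)$; so these three conditions are settled in $O(\Delta^2)$. For $(D2)$ and $(D5)$, which ask for a pair $z_1,z_2$, the crude bound iterates over the $O(\Delta^2)$ pairs of candidates and spends $O(\Delta)$ per pair, giving $O(\Delta^3)$ and so dominating the running time. (One can in fact do better, since in both $(D2)$ and $(D5)$ the requirements on $z_1$ and on $z_2$ essentially decouple, so the two witnesses may be searched independently in $O(\Delta^2)$ total; but $O(\Delta^3)$ already suffices for the claimed bound.)

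The one thing to get right is the representation of neighborhoods, so that the set operations truly run in $O(\Delta)$ rather than $O(|V|)$: I would keep each $N^+[v]$ and $N^-[v]$ as a sorted list, or alternatively use a global boolean array cleared lazily after each use, so that an intersection or inclusion test of two neighborhoods costs time proportional to their sizes, i.e.\ $O(\Delta)$. This bookkeeping is the only genuinely delicate point, and granting it, every edge is processed in $O(\Delta^3)$ time and the stated $O(|E|\Delta^3)$ bound follows.
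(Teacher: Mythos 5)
Your proof is correct and follows essentially the same route as the paper's: both rest on Remark~\ref{rem:edge} to restrict the witness candidates for each edge $xy$ to the $O(\Delta)$ vertices adjacent (in either direction) to both $x$ and $y$, and then charge the cost of the neighborhood intersection/inclusion tests to obtain $O(\Delta^3)$ per edge and $O(|E|\Delta^3)$ overall. If anything, you are more careful than the paper, which tacitly assumes the two witnesses in $(D2)$ and $(D5)$ can be searched with only $O(\Delta)$ checks per edge and does not discuss how neighborhoods are represented; your explicit pair-enumeration bound and data-structure remark close these small gaps while arriving at the same bound.
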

\begin{proof}
	By the arguments given in Section \ref{sec:CartSk} the Algorithm is correct.
	
	To determine the time complexity, note that for any edge $xy\in E$
	one of the Conditions $(D1) - (D5)$ is satisfied for some $z_1, z_2\in
	V$ if $z_1,z_2\in N^-[x]\cup N^+[x]$ (and also $z_1,z_2\in
	N^-[y]\cup N^+[y]$), see Remark \ref{rem:edge}. This implies that there are at most
	$O(|E|\Delta)$ dispensability checks to do. Moreover, several
	intersection and subset relations of neighborhoods have to computed,
	which can all be done in $O(\Delta^2)$ since each neighborhood
	contains at most $\Delta$ elements. Hence, the overall time
	complexity of the for-loop is $O(|E|\Delta^3)$, while Line
	\ref{alg:rem} can be executed in constant time.
\end{proof}



The PFD of a connected digraph $G=(V,E)$ w.r.t.\  the Cartesian product is unique and 
can be computed in $O(|V|^2\log_2(|V|)^2)$ time, see the work of Feigenbaum \cite{Fei86}. The
 algorithm of Feigenbaum 
works as follows. First one computes the PFD w.r.t.\  
Cartesian product of the underlying undirected graph. 
This can be done with the Algorithm of Imrich and Peterin in 
$O(|E|)$ time, \cite{impe-2007}. It is then checked whether
there is a conflict in the directions of the edges between
adjacent copies of the factors, which also determines the
overall time complexity. If there is some conflict, then
different factors, need to combined. The latter step is repeated until
 no conflict exists. 

\begin{prop}[\cite{Fei86}]
	For a connected digraph $G=(V,E)$ the algorithm of Feigenbaum 
	computes the PFD of $G$ w.r.t.\  the Cartesian product in $O(|V|^2(\log_2|V|)^2)$ time.
	\label{prop:alg:Cart}
\end{prop}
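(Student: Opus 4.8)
The plan is to establish both the correctness and the stated running time of the two-phase procedure sketched immediately before the statement; since the result is due to Feigenbaum \cite{Fei86}, the proof essentially amounts to assembling these two ingredients and verifying that nothing is lost in the passage from the undirected to the directed setting.

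First, for correctness I would exploit the precise relationship between the directed Cartesian PFD and the PFD of the underlying undirected graph $U(G)$. One checks directly from the definition of the Cartesian product that $U(G_1 \Box G_2) = U(G_1) \Box U(G_2)$, so if $G = \Box_{i} G_i$ is the unique directed factorization guaranteed by Theorem \ref{thm:uPFD-Cart}, then forgetting orientations yields $U(G) = \Box_i U(G_i)$, and each $U(G_i)$ may itself split further into undirected primes. Consequently the directed prime factors are exactly certain groupings of the undirected prime factors of $U(G)$; that is, the directed factorization is a coarsening of the undirected one. This justifies the first phase: compute the undirected PFD of $U(G)$ with the algorithm of Imrich and Peterin \cite{impe-2007} in $O(|E|)$ time, obtaining the finest admissible product structure.

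Second, I would argue that the orientations dictate exactly which undirected factors may be kept separate. Two undirected prime factors can remain distinct directed factors only if, for every Cartesian edge associated with one of them, the orientation is the same in all parallel layers indexed by the other; otherwise the two factors cannot be separated in $G$ and must be merged. The second phase therefore inspects, for adjacent copies of the factors, whether such a directional conflict occurs, merges the offending factors, and repeats the test on the coarser structure until no conflict remains. Correctness then follows because the grouping produced is precisely the coarsening forced by the orientations, and by uniqueness (Theorem \ref{thm:uPFD-Cart}) it must coincide with the directed PFD.

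For the running time, the first phase costs $O(|E|)$, so the dominating term comes from the conflict detection and merging of the second phase: over all vertices one compares the orientations of corresponding Cartesian edges in parallel layers, and these comparisons together with the repeated merging are bounded in \cite{Fei86} by $O(|V|^2(\log_2|V|)^2)$. I expect the hard part to be precisely this complexity analysis — in particular, bounding the total number of edge-orientation comparisons across all layers and all merging rounds, which needs the careful amortization carried out by Feigenbaum; by contrast, the structural reduction to ``undirected PFD plus merging'' is comparatively routine once the identity $U(G_1 \Box G_2) = U(G_1) \Box U(G_2)$ is in hand.
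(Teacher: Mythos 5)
Your proposal is correct and takes essentially the same route as the paper: the paper offers no proof of its own beyond the citation to \cite{Fei86}, and the sketch it gives immediately before the statement is precisely your two-phase scheme (undirected PFD of $U(G)$ via Imrich--Peterin, then detection of orientation conflicts between parallel copies of factors with repeated merging, the $O(|V|^2(\log_2|V|)^2)$ bound being Feigenbaum's amortized analysis of that second phase). Your additional justification via the identity $U(G_1 \Box G_2) = U(G_1) \Box U(G_2)$ and the resulting coarsening argument is a sound elaboration of that sketch, with the genuinely hard part correctly deferred to \cite{Fei86}.
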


\subsection{Factoring thin Digraphs w.r.t.\  $\boxtimes$}
\label{subsec:alg:thin}

We are now interested in an algorithmic approach for determining the PFD of connected
thin digraphs w.r.t.\  strong product, which works as follows.
For a given thin connected digraph $G$ one first computes the unique 
Cartesian skeleton $\skel(G)$.
This Cartesian skeleton is afterwards factorized with
the algorithm of Feigenbaum \cite{Fei86} and one obtains the
Cartesian prime factors of $\skel(G)$. Note, for an arbitrary factorization
$G=G_1\boxtimes G_2$ of a thin digraph $G$, Proposition \ref{prop:skelProd}
asserts that $\skel(G_1\boxtimes G_2) =\skel(G_1)\Box \skel(G_2)$. 
Since $\skel(G_i)$ is a spanning graph of
$G_i$, $i=1,2$, it follows that the $\skel(G_i)$-layers of $\skel(G_1)\Box
\skel(G_2)$ have the same vertex sets as the $G_i$-layers of $G_1\boxtimes
G_2$. Moreover, if $\boxtimes_{i\in I} G_i$ is the unique PFD of $G$ then
we have $\skel(G)=\Box_{i\in I} \skel(G_i)$. Since $\skel(G_i)$, $i\in I$
need not to be prime with respect to the Cartesian product, we can infer
that the number of Cartesian prime factors of $\skel(G)$, can be larger
than the number of the strong prime factors. Hence, given the PFD of
$\skel(G)$ it might be necessary to combine several Cartesian factors to
get the strong prime factors of $G$. These steps for computing the PFD 
w.r.t.\ the strong product of a thin digraph are
summarized in Algorithm \ref{alg:factor_thin}.

\label{subsec:alg:nonthin}
\begin{figure}[tbp]
  \centering
	\includegraphics[bb= 63 449 520 626, scale=0.7]{./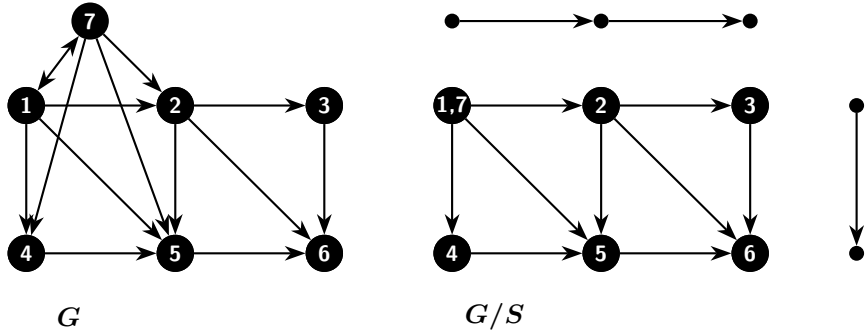}
	\caption{The digraph $G$ is prime. However, the quotient graph $G/S$ has a non-trivial
						product structure. Hence, the prime factors of $G/S$ must be combined, in order
            to find the prime factors of $G$.}
	\label{fig:quotient}
\end{figure}

\begin{prop}
	For a thin connected digraph $G=(V,E)$ with maximum degree $\Delta$, 
	Algorithm \ref{alg:factor_thin} computes the PFD of $G$
	 in $O(|V|^2(\log_2|V|)^2\Delta + |E|\Delta^3)$ time.  
	\label{prop:alg:factor_thin}
\end{prop}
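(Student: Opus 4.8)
The plan is to analyze the running time of Algorithm \ref{alg:factor_thin} by accounting for the cost of each of its three computational steps in turn and then taking the dominant term. The three steps are: (1) computing the Cartesian skeleton $\skel(G)$ via Algorithm \ref{alg:CartSk}; (2) computing the Cartesian PFD of $\skel(G)$ with Feigenbaum's algorithm; and (3) the combinatorial step of grouping Cartesian prime factors into strong prime factors by checking which unions of layers correspond to strong factors of $G$. Each of these has an established cost, and the claimed bound $O(|V|^2(\log_2|V|)^2\Delta + |E|\Delta^3)$ should emerge as the sum.

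First I would invoke Proposition \ref{prop:alg:CartSk}, which states directly that Step (1) runs in $O(|E|\Delta^3)$ time; this accounts for the second summand in the claimed bound. Next, for Step (2) I would apply Proposition \ref{prop:alg:Cart}, giving the Cartesian PFD of $\skel(G)$ in $O(|V|^2(\log_2|V|)^2)$ time. Note that $\skel(G)$ is a spanning subgraph of $G$, so it has the same vertex set and at most as many edges, and its maximum degree is bounded by $\Delta$; hence Feigenbaum's bound applies verbatim with the parameters of $G$. The remaining work is to bound Step (4) of the algorithm (finding the minimal subsets $J$ of the index set $I$ so that the corresponding layers agree with strong-product factors of $G$).

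The main obstacle will be bounding the cost of the layer-merging step (Step (4)) and arguing that it contributes the factor $O(|V|^2(\log_2|V|)^2\Delta)$, i.e. at most a factor $\Delta$ over the cost of the Cartesian PFD. Here I would lean on the remark in the text that these are ``simple generalizations of the Algorithms 24.3, 24.6 and 24.7 for undirected graphs given in \cite{Hammack:2011a},'' so the analysis should parallel the undirected case: for each of the at most $O(\log_2|V|)$ Cartesian prime factors of $\skel(G)$, one tests whether the edges joining adjacent copies of a candidate strong factor are consistently directed and whether non-Cartesian strong edges can be adjoined; each such consistency check across a layer costs $O(\Delta)$ per vertex-edge incidence, inflating the Cartesian-PFD cost by at most a factor of $\Delta$. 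Summing $O(|V|^2(\log_2|V|)^2\Delta)$ from this step with $O(|E|\Delta^3)$ from the skeleton construction yields the stated total, since the $O(|V|^2(\log_2|V|)^2)$ cost of the bare Cartesian factorization is absorbed into the first term.

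To keep the argument honest I would state explicitly that all neighborhood and layer comparisons used in the merging step can be realized within the $O(\Delta^2)$ per-operation budget already noted in the proof of Proposition \ref{prop:alg:CartSk}, and that the number of such operations is controlled by the number of edges between adjacent layers, which is $O(|V|\Delta)$ per factor. I expect no genuinely new difficulty beyond faithfully transporting the undirected complexity bookkeeping of \cite{Hammack:2011a} to the directed setting, where the only change is that each adjacency test must separately consult in- and out-neighborhoods, contributing only constant-factor overhead. The conclusion is that the overall time complexity is dominated by the two summands displayed in the statement.
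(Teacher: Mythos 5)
Your proposal matches the paper's proof in essentially every respect: both argue correctness by noting the algorithm is a one-to-one analog of the undirected Algorithm 24.6 of \cite{Hammack:2011a}, and both obtain the running time by summing the three steps, citing Proposition \ref{prop:alg:CartSk} for the $O(|E|\Delta^3)$ skeleton construction, Proposition \ref{prop:alg:Cart} for Feigenbaum's $O(|V|^2(\log_2|V|)^2)$ Cartesian PFD, and the undirected analysis in \cite[Section 24.3]{Hammack:2011a} for the factor-merging step. The only cosmetic difference is in the bookkeeping for that last step: the paper quotes the book's explicit $O(|E||V|\log_2|V|)$ bound and converts it via $|E|\leq |V|\Delta$ into $O(|V|^2\log_2|V|\Delta)$, whereas you posit a factor-$\Delta$ inflation of the Feigenbaum cost; your estimate is looser but still dominated by the stated total, so both routes reach the same conclusion.
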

\begin{proof}
	Note, Algorithm \ref{alg:factor_thin} is a one-to-one analog of 
  the algorithm for the PFD of undirected thin graphs, see
	\cite[Alg. 24.6]{Hammack:2011a}. 
	 The proof of correctness
	in \cite[Thm 24.9]{Hammack:2011a} for undirected graphs depends on the analogue 
	of Lemma \ref{lem:thinFactors}  
	and the unique construction of the Cartesian skeleton $\skel(G)$
	for the undirected case.
	Thus, using analogous arguments for directed graphs as in 
	\cite[Section 24.3]{Hammack:2011a}
  we can conclude  that Algorithm \ref{alg:factor_thin} is correct. 

	For the time complexity, observe that the Cartesian skeleton $\skel(G)$
	can be computed in $O(|E|\Delta^3)$ time and the PFD of $\skel(G)$
	in $O(|V|^2(\log_2|V|)^2)$ time. We are left with Line \ref{alg:minset1}
	and refer to \cite[Section 24.3]{Hammack:2011a}, where the time complexity
	of this step is determined with $O(|E||V|\log_2|V|)$.
	Since $|E|\leq  |V|\Delta$, we can conclude 
	that $|E||V|\log_2|V|\leq |V|^2\log_2|V|\Delta$. Thus, 
	we end in overall time complexity of $O(|V|^2(\log_2|V|)^2\Delta + |E|\Delta^3)$.
\end{proof}

\subsection{Factoring non-thin Digraphs w.r.t.\  $\boxtimes$}

\begin{figure}[tbp]
  \centering
	\includegraphics[bb=124 452 473 694, scale=1.1]{./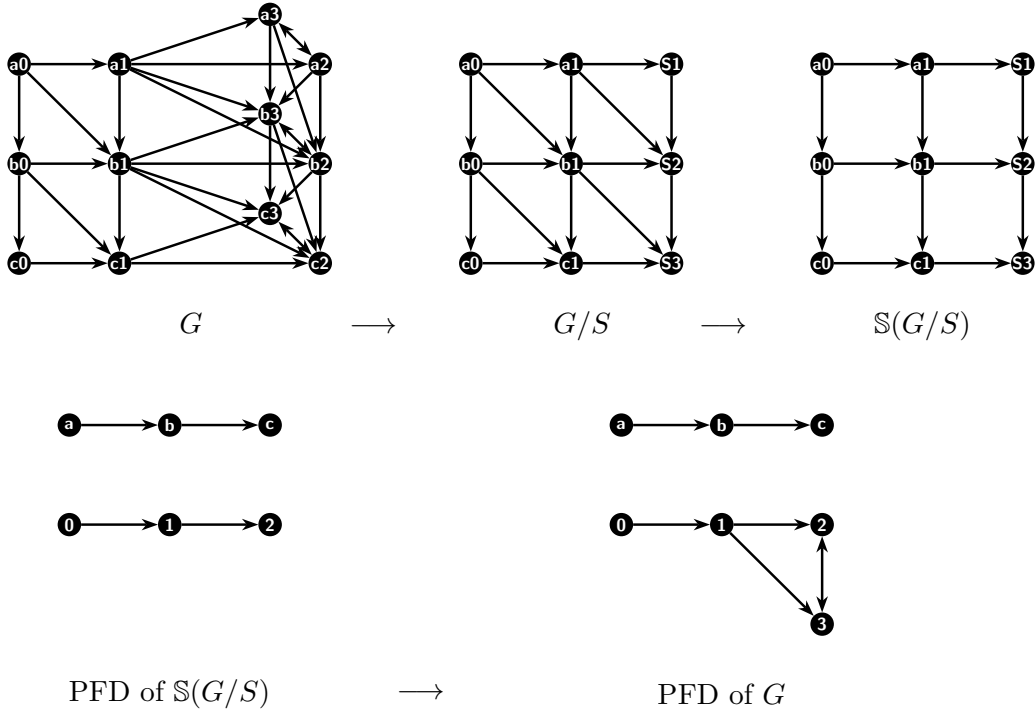}
	\caption{Illustrated are the basic steps of the PFD of strong product of digraphs.} 
	\label{fig:stronggeneral}
\end{figure}

We are now interested in an algorithmic approach for determining the PFD of connected
non-thin digraphs w.r.t.\  strong product, which works as follows.
Given an arbitrary digraph $G$, one first 
extracts a possible complete factor $K_l$ of maximal size, resulting in a graph $G'$, i.e.,
$G\simeq G'\boxtimes K_l$, and 
computes the quotient graph $H = G'/S$. This graph $H$ is thin and the PFD of $H$ w.r.t.\ 
the strong product can be computed with Algorithm \ref{alg:factor_thin}. 
Finally, given the prime factors of $H$ it might be the case that factors 
need to be combined to determine the prime factors of $G'$,
 see Figure \ref{fig:quotient}. This can be achieved by 
repeated application of Lemma \ref{lem:gcd}. 
Since $G\simeq G'\boxtimes K_l$, we can conclude that
the prime factors of $G$ are then the prime factors of $G'$ together
with the complete factors $K_{p_1},\dots,K_{p_j}$,
where $p_1\dots p_j$ are the prime factors of the
integer $l$. 
 This approach is summarized in Algorithm 
\ref{alg:nonthin_factor}. For an illustrative example see 
 Figure \ref{fig:stronggeneral}.

\begin{lem} 
Suppose that it is known that a given digraph $G$ that does not
admit any complete graphs as a factor is a strong product 
graph $G_1\boxtimes G_2$, and suppose that
the decomposition $G/S = G_1/S \boxtimes G_2/S$ is known. 
Then $G_1$ and $G_2$ can be determined from $G$, $G_1/S$ and $G_2/S$.

In fact, if $D(x_1, x_2)$ denotes the size of the S-equivalence class of $G$ that
is mapped into $(x_1, x_2) \in G_1/S \boxtimes G_2/S$, then the size $D(x_1)$ 
of the equivalence class of $G_1$ mapped into $x_1 \in G_1/S$ is 
$gcd\{D(x_1, y) \mid y \in  V (G_2)\}$.
Analogously for $D(x_2)$.
	\label{lem:gcd}
\end{lem}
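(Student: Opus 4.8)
The plan is to first determine how $S$-classes and their sizes behave under the strong product, and then to characterize when a digraph has a nontrivial complete factor purely in terms of these sizes. The gcd formula and the reconstruction of $G_1,G_2$ follow quickly once these two facts are in place.

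First I would show that the $S$-classes of $G=G_1\boxtimes G_2$ are precisely the products of the $S$-classes of the factors. From $N_G^{+}[(v_1,v_2)]=N_{G_1}^{+}[v_1]\times N_{G_2}^{+}[v_2]$, the analogous identity for $N^{-}$, and the fact that these products are nonempty, equality of two such products forces coordinatewise equality. Hence $(v_1,v_2)\sim_{S}(w_1,w_2)$ in $G$ iff $v_1\sim_{S}w_1$ in $G_1$ and $v_2\sim_{S}w_2$ in $G_2$, so $S((v_1,v_2))=S(v_1)\times S(v_2)$. Using Lemma~\ref{lem:prodQuotients} to identify the vertex of $G/S$ carrying this class with $(x_1,x_2)\in V(G_1/S\boxtimes G_2/S)$, I obtain the multiplicativity
\[ D(x_1,x_2)=D(x_1)\cdot D(x_2). \]

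Next I would prove the key structural fact: a digraph $W$ has a nontrivial complete factor $K_d$ ($d\ge 2$) iff $d$ divides $|S(v)|$ for every $v\in V(W/S)$. The main tool is that $W$ is determined up to isomorphism by its thin quotient $W/S$ together with the class sizes: by Lemma~\ref{lem:eqcl-complete} each class induces a complete subgraph, and by the edge characterization established in the proof of Lemma~\ref{lem:quotientthin} an arc between two distinct classes is present in $W$ exactly when the corresponding arc is present in $W/S$. Thus $W$ is the \emph{inflation} of $W/S$ obtained by replacing each $v$ by a complete graph on $|S(v)|$ vertices. If $d\mid|S(v)|$ for all $v$, then the inflation $W'$ of $W/S$ with sizes $|S(v)|/d$ satisfies $K_d\boxtimes W'\cong W$: by Lemma~\ref{lem:prodQuotients} and $K_d/S=K_1$ we have $(K_d\boxtimes W')/S\cong W/S$, while by the first paragraph the class sizes of $K_d\boxtimes W'$ are $d\cdot(|S(v)|/d)=|S(v)|$, so both graphs are the same inflation. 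The converse is immediate from the first paragraph applied to $K_d\boxtimes W'$. Consequently $W$ has no nontrivial complete factor precisely when $\gcd\{|S(v)|\mid v\in V(W/S)\}=1$.

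Finally I would assemble the statement. A complete factor of $G_2$ would also be a complete factor of $G=G_1\boxtimes G_2$, which has none by hypothesis; hence $G_2$ has none either and $\gcd\{D(x_2)\mid x_2\in V(G_2/S)\}=1$. Since $D(x_1,y)$ depends only on the $S$-class of $y$, multiplicativity gives
\[ \gcd\{D(x_1,y)\mid y\in V(G_2)\}=D(x_1)\cdot\gcd\{D(x_2)\mid x_2\in V(G_2/S)\}=D(x_1), \]
and symmetrically $D(x_2)=\gcd\{D(x,x_2)\mid x\in V(G_1)\}$. From $G$ and the known decomposition $G/S=G_1/S\boxtimes G_2/S$ one reads off every value $D(x_1,x_2)$, so these gcd's recover all class sizes of $G_1$ and $G_2$; inflating each given thin quotient $G_i/S$ by the corresponding sizes then reconstructs $G_1$ and $G_2$ uniquely. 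I expect the main obstacle to be the structural characterization of complete factors in the second paragraph — in particular making precise that the inflation of a thin digraph by prescribed class sizes is well defined and that $K_d\boxtimes W'$ reproduces exactly this inflation. Once that uniqueness is secured, the gcd computation and the reconstruction are routine.
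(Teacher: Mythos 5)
Your proof is correct and takes essentially the same route as the paper: the paper's own ``proof'' merely invokes Lemmas \ref{lem:thinFactors}, \ref{lem:quotientthin}, \ref{lem:eqcl-complete} and \ref{lem:prodQuotients} and defers to the undirected argument of \cite[Lemma 5.40]{IMKL-00}, which is exactly the argument you spell out --- multiplicativity of $S$-class sizes over $\boxtimes$, the characterization of complete factors $K_d$ by $d$ dividing all class sizes, and reconstruction of the factors by inflating the known thin quotients. The only difference is that you supply in full the details the paper leaves to the citation.
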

\begin{proof}
Invoking Lemma \ref{lem:thinFactors}, \ref{lem:quotientthin}, \ref{lem:eqcl-complete},
and \ref{lem:prodQuotients}, the assertion can be implied by the same arguments 
as in the proof for undirected graphs \cite[Lemma 5.40]{IMKL-00}. 
\end{proof}



\begin{prop}
	For a connected digraph $G=(V,E)$ with maximum degree $\Delta$, 
	Algorithm \ref{alg:nonthin_factor} computes the PFD of $G$
	 in  $O(|V|^2(\log_2|V|)^2\Delta + |E|\Delta^3)$ time.  
	\label{prop:alg:nonthin-factor}
\end{prop}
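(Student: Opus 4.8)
The plan is to establish the running time of Algorithm~\ref{alg:nonthin_factor} by decomposing it into its constituent steps and bounding each one. First I would observe that the dominating costs come from the structural computations (extracting the complete factor, forming the quotient, and invoking Algorithm~\ref{alg:factor_thin}), while the final combination step driven by Lemma~\ref{lem:gcd} should be subordinate. Since the overall target bound $O(|V|^2(\log_2|V|)^2\Delta + |E|\Delta^3)$ exactly matches the bound for Algorithm~\ref{alg:factor_thin} established in Proposition~\ref{prop:alg:factor_thin}, the natural strategy is to show that every other step of Algorithm~\ref{alg:nonthin_factor} runs within this same budget, so that the total remains of the same order.

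Concretely, I would proceed line by line through Algorithm~\ref{alg:nonthin_factor}. For Line~\ref{alg:thin1}, extracting a maximal complete factor $K_l$ so that $G \simeq G'\boxtimes K_l$ amounts to detecting the largest complete strong factor; this can be identified from the $S$-classes (which induce complete subgraphs by Lemma~\ref{lem:eqcl-complete}) and is dominated by the cost of computing $S$, which requires comparing in- and out-neighborhoods of vertices and is achievable within $O(|E|\Delta)$ or similar. Line~\ref{alg:thin2} factors the integer $l \le |V|$, which is negligible compared to the graph-theoretic terms. Computing $H = G'/S$ in the subsequent line requires grouping vertices by their neighborhoods, again bounded by neighborhood comparisons on the order of $O(|E|\Delta)$. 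The call to Algorithm~\ref{alg:factor_thin} on the thin graph $H$ contributes exactly the claimed $O(|V|^2(\log_2|V|)^2\Delta + |E|\Delta^3)$ by Proposition~\ref{prop:alg:factor_thin}, noting that $H$ has no more vertices or edges than $G$.

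The step I expect to require the most care is Line~\ref{alg:minset2}, the repeated application of Lemma~\ref{lem:gcd} to combine prime factors of $H$ into the prime factors of $G'$. Here one searches over subsets $J \subseteq I = \{1,\dots,n\}$ and, for each candidate, computes the $\gcd$ of the $S$-class sizes $D(x_1,y)$ across layers to recover the class sizes $D(x_1)$ in the putative factor, then checks consistency. Since $n = O(\log_2|V|)$ (the number of prime factors is logarithmic in the vertex count) and each $\gcd$ computation over at most $|V|$ values of magnitude at most $|V|$ costs $O(|V|\log_2|V|)$, I would argue that the entire combination phase stays well within the target bound. As in the analysis of Algorithm~\ref{alg:factor_thin}, the cleanest route is to defer to the corresponding undirected analysis in \cite[Section~24.3]{Hammack:2011a} and \cite[Lemma~5.40]{IMKL-00}, since the directed case differs only in the use of the directed Cartesian skeleton, whose construction cost has already been accounted for.

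Summing the contributions, every step other than the call to Algorithm~\ref{alg:factor_thin} is subsumed by the terms $O(|E|\Delta^3)$ or is of strictly lower order, so the overall complexity is $O(|V|^2(\log_2|V|)^2\Delta + |E|\Delta^3)$, as claimed. The main obstacle is verifying that the subset search and $\gcd$-based reconstruction in Line~\ref{alg:minset2} do not exceed this budget; once that is confirmed, the rest follows by routine accounting against the already-established bound for the thin case.
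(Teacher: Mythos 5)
Your proposal is correct and follows essentially the same route as the paper's proof: a line-by-line accounting in which the extraction of the complete factor, the integer factorization, and the quotient computation are bounded within the budget, the call to Algorithm~\ref{alg:factor_thin} contributes the dominant $O(|V|^2(\log_2|V|)^2\Delta + |E|\Delta^3)$ term via Proposition~\ref{prop:alg:factor_thin}, and the factor-combination step of Line~\ref{alg:minset2} (and the correctness of the overall scheme) is settled by deferring to the undirected analysis in \cite[Section~24.3]{Hammack:2011a}. The only differences are cosmetic: the paper cites \cite[Lemma~24.10]{Hammack:2011a} to get $O(|E|)$ for the early lines where you give a looser (but still subsumed) $O(|E|\Delta)$ estimate, and it quotes the explicit $O(|E||V|\log_2|V|)$ bound for Line~\ref{alg:minset2} rather than your subset-counting heuristic.
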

\begin{proof}
	Again note,  Algorithm \ref{alg:nonthin_factor} is a one-to-one analog of 
  the algorithm for the PFD of undirected thin graphs, see \cite[Alg. 24.7]{Hammack:2011a}.
  The proof of correctness
	in \cite[Thm 24.12]{Hammack:2011a} for undirected graphs depends on the analogue 
	of Lemma \ref{lem:thinFactors}, \ref{lem:quotientthin}, \ref{lem:eqcl-complete},
	\ref{lem:prodQuotients} and the correctness of 
	Algorithm \ref{alg:factor_thin} for the undirected case. 
	Thus, using analogous arguments for directed graphs as in 
	\cite[Section 24.3 and Thm. 24.12 ]{Hammack:2011a}, 
	we can conclude the correctness of Algorithm \ref{alg:nonthin_factor}.

	For the time complexity, to extract complete factors $K_l$, its PFD and the computation of
	the quotient graph $G/S$ we refer to \cite[Lemma 24.10]{Hammack:2011a} and conclude that
	Line \ref{alg:thin1}-\ref{alg:thin2} run in  $O(|E|)$ time. 
	The PFD of $G/S$ w.r.t.\  the strong product can be computed in 
	$O(|V|^2(\log_2|V|)^2\Delta + |E|\Delta^3)$ time. 
	We are left with Line \ref{alg:minset2}
	and again refer to \cite[Section 24.3]{Hammack:2011a}, where the time complexity
	of this step is determined with $O(|E||V|\log_2|V|)$.
\end{proof}

\section{Summary and Outlook}

We presented in this paper the first polynomial-time algorithm that computes the prime factors 
of digraphs. The key idea for this algorithm was the construction of a unique Cartesian skeleton
for digraphs. The PFD of the Cartesian skeleton w.r.t.\  the Cartesian product
was utilized to find the PFD w.r.t.\  the strong product of the digraph under investigation. 

Since the strong product of digraphs is a special case of the so-called direct product
of digraphs, we assume that this approach can also be used to extend the known algorithms
for the PFD w.r.t.\  the direct product \cite{Kloeckl:07, Kloeckl:10}. The main challenge
in this context is a feasible construction of a so-called Boolean square, in which
the Cartesian skeleton is finally computed \cite{HAIM-09}. 

Moreover, we strongly assume that the definition of the Cartesian skeleton can be
generalized in a natural way for the computation of the strong product of di-hypergraphs
in a similar way as for undirected hypergraphs in \cite{HON-13, HOS11}.

Finally, since many graphs are prime although they can have 
a product-like structure, also known as \emph{approximate} graph products, 
the aim is to design algorithms that can handle such ``noisy'' graphs.  
Most of the  practically viable approaches are based 
 on \emph{local} factorization algorithms, that cover a graph by factorizable small patches 
and attempt to stepwisely extend regions with product structures  \cite{HIKS-08, HIKS-09, Hel-10, HIK-13, HIK-13b}. 
Since the construction of the Cartesian 
skeleton works on a rather local level, i.e, the usage of neighborhoods, we suppose that
our approach can in addition be used to establish local methods for finding 
approximate strong products of digraphs.


\bibliographystyle{plain}
\bibliography{biblio}

\end{document}